\documentclass[11pt,a4paper]{article}
\addtolength{\oddsidemargin}{-30pt}
\evensidemargin=\oddsidemargin
\addtolength{\topmargin}{-30pt}

\addtolength{\textheight}{50pt}
\addtolength{\textwidth}{60pt}
\def\squarebox#1{\hbox to #1{\hfill\vbox to #1{\vfill}}}
\newcommand{\qed}{\hspace*{\fill}
            \vbox{\hrule\hbox{\vrule\squarebox{.667em}\vrule}\hrule}\smallskip\newline}
\newtheorem{THEOREM}{Theorem}
\newenvironment{theorem}{\begin{THEOREM} \hspace{-.85em} {\bf :} \rm}                        {\end{THEOREM}}
\newtheorem{LEMMA}[THEOREM]{Lemma}
\newenvironment{lemma}{\begin{LEMMA} \hspace{-.85em} {\bf :} \rm}                      {\end{LEMMA}}
\newtheorem{COROLLARY}[THEOREM]{Corollary}
\newenvironment{corollary}{\begin{COROLLARY} \hspace{-.65em} {\bf :} \rm}                          {\end{COROLLARY}}
\newtheorem{FACT}[THEOREM]{Fact}

\newenvironment{proof}{\noindent {\bf Proof:} \hspace{-.2em}} {}

\newtheorem{DEFINITION}{Definition}
\newenvironment{definition}{\begin{DEFINITION} \hspace{-.85em} {\bf :} \rm}
                            {\end{DEFINITION}}
\newtheorem{PROPOSITION}{Proposition}

\newtheorem{CLAIM}[THEOREM]{Claim}

\usepackage{graphicx}

\usepackage{subfigure}
\usepackage{graphicx}
\usepackage{graphics}
\usepackage{amssymb}
\input{epsf}

\usepackage{algorithm}
\usepackage{algpseudocode}
\begin{document}
\title{Parameterized algorithms for the 2-clustering problem with minimum sum and minimum sum of squares objective functions}
\author{Bang Ye Wu\thanks{National Chung Cheng University, ChiaYi, Taiwan 621,
R.O.C., E-mail: bangye@cs.ccu.edu.tw}
 and Li-Hsuan Chen
\\
{\small Dept. of Computer Science and Information Engineering}\\
{\small National Chung Cheng University, Taiwan}}
\date{}
\maketitle
\subsection*{\centering Abstract}
{\em
In the {\sc Min-Sum 2-Clustering} problem, we are given a graph and a parameter $k$, and the goal is to determine if there exists a 2-partition of the vertex set such that the total conflict number is at most $k$, where the conflict number of a vertex is the number of its non-neighbors in the same cluster and neighbors in the different cluster. 
The problem is equivalent to {\sc 2-Cluster Editing} and {\sc 2-Correlation Clustering} with an additional multiplicative factor two in the cost function. 
In this paper we show an algorithm for {\sc Min-Sum 2-Clustering} with time complexity
$O(n\cdot 2.619^{r/(1-4r/n)}+n^3)$, where $n$ is the number of vertices and $r=k/n$.
Particularly, the time complexity is $O^*(2.619^{k/n})$ for $k\in o(n^2)$ and polynomial for $k\in O(n\log n)$, which implies that the problem can be solved in subexponential time for $k\in o(n^2)$.
We also design a parameterized algorithm for a variant in which the cost is the sum of the squared conflict-numbers. For $k\in o(n^3)$, the algorithm runs in subexponential $O(n^3\cdot 5.171^{\theta})$ time, where $\theta=\sqrt{k/n}$.  }
{\flushleft\bf Key words.} parameterized algorithm, kernelization, cluster graph, clustering, graph modification.
{\flushleft\bf AMS subject classifications.} 65W05, 68R10, 68Q25, 05C85, 91C20

\section{Introduction}

{\flushleft \em Problem definition and motivation.}
Clustering is an important concept with applications in numerous fields, and the problem {\sc Cluster Editing}, also known as {\sc Correlation Clustering}, 
is a graph theoretic approach to clustering \cite{Bansal,sha04}.
A \emph{cluster graph} is a graph whose every connected component is a clique.
{\sc Cluster Editing} asks for the minimum number of edge insertions and  deletions to modify the input graph into a cluster graph. 
One of the studied variants is {\sc $p$-Cluster Editing}, in which one is asked to modify the input graph into exactly $p$ disjoint maximal cliques \cite{sha04}.
In many applications, graph edges represent the similarities between items (vertices), and one wants to partition items into clusters such that items in the same cluster are similar and items in different clusters are dissimilar. Hence, an edited edge can be thought of as a \emph{conflict} (or \emph{disagreement} as named in {\sc Correlation Clustering}) between two items in the clustering.  
Let the \emph{conflict number} of a vertex be the number of edited edges incident to it. Then, feasible sets of edits of cardinality $k$ correspond one-to-one to clusterings with total conflict number $2k$. That is, {\sc Cluster Editing} is equivalent to finding a vertex partition with minimum total conflict number.
The transformation of the problem definition provides us easier ways to define other meaningful objective functions on the conflict numbers, such as the maximum conflict number and the sum of squared conflict numbers.

In this paper we focus on 2-clusterings which partition the vertices into two subsets called clusters.
For an input graph $G=(V,E)$ and a 2-partition $\pi=(V_1,V_2)$ of $V$, two vertices $u$ and $v$ are in \emph{conflict} if they are in the same cluster but $(u,v)\notin E$ or they are in different clusters but $(u,v)\in E$.
Let $c_\pi(v)$ denote the number of vertices in conflict with $v$ in $\pi$.
A 2-clustering problem in general asks for a 2-partition such that the conflict numbers are as ``small'' as possible. As in many optimization problems, there are different ways to define ``small'' for a set of quantities (or a vector). Possibly the most frequently used cost functions are min-sum, min-sum of squares, and min-max, which are equivalent to the 1-norm, 2-norm, and $\infty$-norm for defining the length of a vector in linear algebra, respectively. Let $h_1(\pi)=\sum_{v\in V}c_\pi(v)$ and  $h_2(\pi)=\sum_{v\in V}c^2_\pi(v)$ be costs of a 2-partition $\pi$. 
The two problems studied in this paper are formally defined as follows.
We focus on their decision versions.
\begin{quote}
{\sc Problem:} {\sc Min-Sum 2-Clustering}\\
{\sc Instance}: A graph $G=(V,E)$ and a nonnegative integer $k$.\\
{\sc Question}: Is there a 2-partition $\pi=(V_1,V_2)$ of $V$ such that $h_1(\pi)\leq k$ and $V_1,V_2\neq\emptyset$?
\end{quote}
Following the definition in the literature, we exclude the case that $V_1$ or $V_2$ is empty. 
The second problem, named {\sc Min-Square 2-Clustering}, is defined similarly except the cost is defined by $h_2$. 
Intuitively, while {\sc Min-Sum 2-Clustering} seeks to minimize the total conflict number, {\sc Min-Square 2-Clustering} looks for 2-partitions simultaneously minimizing the total and the individual conflict numbers because $h_2(\pi)=\sum_{v\in V}c^2_\pi(v)=h_1^2(\pi)/n+n\sigma^2$, where $n$ is the number of vertices and $\sigma^2$ is the \emph{variance} of the conflict numbers.
  
{\flushleft \em Previous results.}
Shamir et al. \cite{sha04} studied the computational complexities of three edge modification problems. {\sc Cluster Editing} asks for the minimum total number of edge insertions and deletions to modify a graph into a cluster graph, while in {\sc Cluster Deletion} (respectively, {\sc Cluster Completion}), only edge deletions (respectively, insertions) are allowed.
They showed that {\sc Cluster Editing} is NP-hard, {\sc Cluster Deletion} is Max SNP-hard, and {\sc Cluster Completion} is polynomial-time solvable.
They also showed that {\sc $p$-Cluster Deletion} is NP-hard for any $p>2$ but polynomial-time solvable for $p=2$, and {\sc $p$-Cluster Editing} is NP-hard for any $p\geq 2$.
The parameterized version of {\sc Cluster Editing} and variants of it were studied intensively \cite{fp1,boc11,fp2,fp3,fp4,fp5,fp6,fp7}. A variant with vertex (rather than edge) deletions was considered in \cite{huf07}, and another variant in which overlapping clusters are allowed was studied in \cite{fel11}.

{\sc Cluster Editing} is equivalent to {\sc Correlation Clustering} on complete signed graphs.
In a signed graph, each edge is labelled by ``+'' or ``-'', representing that the two items are similar or dissimilar; or the two persons like or dislike each other in social network analysis.
For a clustering, a positive edge within a cluster or a negative edge between clusters is an \emph{agreement}, and a positive edge across clusters or a negative edge inside a cluster is a \emph{disagreement}. 
The maximization version of the {\sc Correlation Clustering} problem 
seeks to maximize the number of agreements, while the minimization version aims to minimize the number of disagreements.
{\sc Correlation Clustering} on complete signed graphs was formulated and studied in \cite{Bansal}, in which the authors presented a PTAS for the maximization version and a constant factor approximation algorithm for the minimization version.
In \cite{ail08}, Ailon et al. proposed a simple randomized algorithm for the minimization version. For the unweighted case, the expected approximation ratio is three. They also showed that it is a 5-approximation algorithm for the weighted case with the \emph{probability constraints}, i.e., it is assumed that  $w_{ij}^-+w_{ij}^+=1$ for each pair $(i,j)$ of vertices, where $w_{ij}^+$ and $w_{ij}^-$ are the nonnegative weights of the ``+'' and ``-'' edges between $i$ and $j$, respectively. If in addition the weights satisfy the triangle inequality ($w_{ik}^-\leq w_{ij}^-+w_{jk}^-$ for all vertices $i,j,k$), then the approximation ratio is two.

For a constant $p\geq 2$, {\sc $p$-Correlation Clustering} is a variant of {\sc Correlation Clustering} such that the vertices are partitioned into exactly $p$ clusters. 
While the minimization version of {\sc Correlation Clustering} on complete signed graphs is APX-hard \cite{Cha05}, Giotis and Guruswami showed that both the minimization and the maximization versions of {\sc $p$-Correlation Clustering} admit PTAS for unweighted complete signed graphs \cite{Giotis}.
{\sc 2-Correlation Clustering} is also known as {\sc Balanced Subgraph} which name comes from the application in social network analysis \cite{Har,huf07b,snabook}.
Another related problem studied in the literature is {\sc Consensus Clustering}~\cite{ail08,bon09,bon08,fil03}.

For {\sc $p$-Cluster Editing}, a kernel with $(p+2)k+p$ vertices was given by Guo \cite{fp7}. A variant such that the conflict number of each vertex must be bounded by a parameter was studied in \cite{kom12}. The problem of finding a 2-clustering minimizing the maximum conflict number is NP-hard \cite{mm2ej}.
Very recently, Fomin et al. gave a parameterized algorithm with time complexity $O(2^{O(\sqrt{pk})}+n^2)$, where $n$ is the number of vertices \cite{fomin13}. They also showed a lower bound for the parameterized complexity: For any constant $0\leq \sigma\leq 1$ there exists a function $p(k)\in \Theta(k^\sigma)$ such that {\sc $p$-Cluster Editing} restricted to instances with $p=p(k)$ cannot be solved in $2^{o(\sqrt{pk})}\cdot n^{O(1)}$ time unless the Exponential Time Hypothesis fails.

{\flushleft \em Our contributions.}
We develop parameterized algorithms for {\sc Min-Sum 2-Clustering} and {\sc Min-Square 2-Clustering}.
For {\sc Min-Sum 2-Clustering}, the algorithm runs in  $O(n\cdot \phi^{2r/(1-4r/n)}+n^3)$ time, where $\phi\approx 1.618$ and $r=k/n$.
When $k\in o(n^2)$, the time complexity is $O(n\cdot 2.619^{k/n}+n^3)$. Our result implies that the problem, as well as {\sc 2-Cluster Editing} can be solved in subexponential, i.e. $O^*(2^{o(n)})$, time for $k\in o(n^2)$, where the $O^*(\cdot)$ notation ignores factors polynomial in $n$.
In particular it is polynomial-time solvable for $k\in O(n\log n)$. We also note that the time complexity is better than $O^*(2^{O(\sqrt{pk})})$ recently obtained by Fomin et al. \cite{fomin13} for the special case of $p=2$.
Even when $k=\delta n^2$ with small constant $\delta$, our algorithm improves the brute-force algorithm significantly. For example, when $\delta=0.1$, we have that $r=k/n=0.1n$, and the time complexity is $O^*(\phi^{2r/(1-4r/n)})\approx O^*(1.174^{n})$, much better than $O^*(2^n)$.

For {\sc Min-Square 2-Clustering} with cost bound $k\in o(n^3)$, our algorithm runs in subexponential $O(n^3\cdot 5.171^{\sqrt{k/n}})$ time, which also implies that the problem is polynomial-time solvable for $k\in O( n\log^2 n)$.

Both of the algorithms look for a 2-partition and the main steps are sketched as follows.
\begin{enumerate}
\item Guess a vertex that has the smallest conflict number in an optimal 2-partition and set an initial 2-partition according to its neighborhood. 
\item Using kernelization-style rules, determine for most of the vertices whether they should be swapped to the other cluster, and leave a set of undetermined vertices of size $O(k/n)$ in the case of {\sc Min-Sum 2-Clustering}, or $O(\sqrt{k/n})$ in the case of {\sc Min-Square 2-Clustering}.
\item Apply a standard branching algorithm to the undetermined vertices.
\end{enumerate}

{\flushleft \em Organization of the paper.}
In Section 2 we give some notation and definitions, as well as some properties used in this paper. The reduction algorithm is in Section~3. In Sections 4 and 5 we show the algorithms for the two problems, respectively. Finally some concluding remarks are in Section~6.

\section{Preliminaries}

An instance of a parameterized problem consists of $(I,k)$, where $k$ is the parameter. A problem is \emph{fixed-parameter tractable} (FPT) if it can be solved in time complexity $O(f(k)\cdot q(|I|))$, where $f$ is an arbitrary computable function of $k$ and $q$ is a polynomial in the input size.
For more details about parameterized complexity, we refer to the book of Downey and Fellows \cite{fptbook}.
\emph{Kernelization} is a widely-used technique for parameterized algorithms.
In polynomial time, a kernelization algorithm converts an instance $(I,k)$ to a reduced instance $(I',k')$, called a \emph{kernel} such that the answer is not changed, $k'\leq k$ and $|I'|$ is bounded by a computable function of $k$.

For two sets $S_1$ and $S_2$, the set difference is denoted by $S_1\setminus S_2$, and
the symmetric difference is denoted by $S_1\ominus S_2=(S_1\setminus S_2)\cup (S_2\setminus S_1)$. For simplicity, $S_1\ominus v=S_1\ominus \{v\}$.
Throughout this paper, $G=(V,E)$ is the input graph and $n=|V|$.
For a vertex $v$, let $N_G[v]=\{u\mid (u,v)\in E\}\cup \{v\}$ denote the closed neighborhood of $v$ in $G$. 

For a vertex set $V$, a \emph{2-partition} of $V$ is an unordered pair $\pi=(V_1,V_2)$ of subsets of $V$ such that $V_1\cap V_2=\emptyset$ and $V_1\cup V_2=V$. The two subsets $V_1$ and $V_2$ are called \emph{clusters}. Two vertices $u$ and $v$ are in \emph{conflict} if they are in the same cluster but $(u,v)\notin E$ or they are in different clusters but $(u,v)\in E$. 
Let $C_\pi(v)$ denote the set of vertices in conflict with $v$ in $\pi$ and $c_\pi(v)=|C_\pi(v)|$ be the \emph{conflict number} of $v$. We assume that $v\notin C_\pi(v)$ for each vertex $v$. 
When there is no confusion, we shall omit the subscript and simply use $c(\cdot)$ instead of $c_\pi(\cdot)$.
For $u,v\in V$ and $S\subseteq V$, let $C_\pi(v,S)=C_\pi(v)\cap S$, $c_\pi(v,S)=|C_\pi(v,S)|$, and $c_\pi(v,u)=c_\pi(v,\{u\})$. 
For two vertex subsets $S_1$ and $S_2$, let $c_\pi(S_1,S_2)=\sum_{v\in S_1}c_\pi(v,S_2)$. Note that $c_\pi(S_1,S_2)=c_\pi(S_2,S_1)$.
When the 2-partition $\pi$ is clear from the context, we shall also omit the subscript in  $c_\pi(\cdot,\cdot)$ and $C_\pi(\cdot,\cdot)$.

A graph $G=(V,E)$ is a \emph{2-cluster graph} if it consists of exactly two disjoint maximal cliques. 
In the literature, {\sc Min-Sum 2-Clustering} is also known as {\sc 2-Cluster Editing}.
For $G=(V,E)$, a set $D\subseteq V\times V$ is a \emph{set of edits} (to 2-cluster graph) for $G$ if $G'=(V,E\ominus D)$ is a 2-cluster graph. In other words, $G$ can be modified into a 2-cluster graph by inserting $D\setminus E$ and deleting $D\cap E$.
Given a graph $G$ and integer $k$, {\sc 2-Cluster Editing} asks if there is a set of edits $D$ for $G$ such that $|D|\leq k$.
Let $E^*(\pi)=\bigcup_{i=1,2}\{(u,v)\mid u,v\in V_i\}$ which is the edge set of the 2-cluster graph. By definition, if $D$ is a set of edits and $\pi$ is the corresponding 2-partition, then $E\ominus D=E^*(\pi)$ and $D=\bigcup_{v\in V}\{(v,u)\mid u\in C_\pi(v)\}$. Therefore, finding a set of edits is equivalent to finding the corresponding 2-partition. Furthermore, $|D|=(1/2)\sum_{v\in V}c_\pi(v)$. Consequently {\sc Min-Sum 2-Clustering} is equivalent to {\sc 2-Cluster Editing} with an additional multiplicative factor two in the cost function.

Both {\sc Min-Sum 2-Clustering} and {\sc Min-Square 2-Clustering} look for a 2-partition without empty clusters. The cost functions of the two problems are $h_1(\pi)\equiv \sum_v c_\pi(v)$ and $h_2(\pi)\equiv \sum_v c_\pi^2(v)$, respectively. Given a graph $G$ and a 2-partition $\pi$, computing $C_\pi(v)$ for all $v\in V$, as well as $h_1(\pi)$ and $h_2(\pi)$, can be easily done in $O(n^2)$ time by checking all pairs of vertices. 

To \emph{flip} a vertex $v$ in a 2-partition $\pi=(V_1,V_2)$ is to move $v$ to the other cluster, that is, we change $\pi$ to $\pi\ominus v\equiv (V_1\ominus v,V_2\ominus v)$.
Flipping a subset $S$ of vertices changes $(V_1,V_2)$ to $(V_1\ominus S,V_2\ominus S)$.
Suppose that $\pi'=\pi\ominus v$. Then $u\in C_{\pi'}(v)$ if and only if $u\notin  C_{\pi}(v)$ for all $u\neq v$. That is, flipping a vertex exchanges its  conflicting-relations with all the other vertices. Furthermore, when flipping a set $F$,
only those conflicting pairs in $F\times \bar{F}$ change, where $\bar{F}=V\setminus F$.
If $\pi'=\pi\ominus F$, then
\begin{eqnarray}
C_{\pi'}(v)=\left\{
\begin{array}{ll}
C_{\pi}(v,F)\cup (\bar{F}\setminus C_{\pi}(v,\bar{F}))&\mbox{if }v\in F;\\
C_{\pi}(v,\bar{F})\cup (F \setminus C_{\pi}(v,F))&\mbox{if }v\in \bar{F}.
\end{array}
\right.\label{eq:flip}
\end{eqnarray}

The \emph{profit} of a flipping set $F$, denoted by $\Delta(F)$, is the decrement of total conflict number after flipping $F$, i.e., $\Delta(F)=h_1(\pi)-h_1(\pi\ominus F)=\sum_v c_\pi(v)-\sum_v c_{\pi\ominus F}(v)$.
\begin{lemma}\label{lem:flip}
$\Delta(F)=4c_\pi(F,\bar{F})-2|F||\bar{F}|$.
\end{lemma}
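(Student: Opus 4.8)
The plan is to exploit the structural fact, already recorded in equation~(\ref{eq:flip}), that flipping $F$ changes only the conflicting pairs that straddle $F$ and $\bar{F}$. I would work with the pair-counting viewpoint: write $h_1(\pi)=\sum_v c_\pi(v)$ as twice the number of conflicting (unordered) pairs, and split those pairs into three groups according to how many endpoints lie in $F$. Flipping $F$ preserves the same-cluster/different-cluster relationship of any pair with both endpoints inside $F$ (they move together) and likewise for any pair with both endpoints inside $\bar{F}$, so the conflicts internal to $F$ and internal to $\bar{F}$ are unchanged; only the pairs in $F\times\bar{F}$ have their conflict status toggled, which is exactly what (\ref{eq:flip}) encodes.

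Concretely, I would first note that $c_\pi(F,\bar{F})$ counts exactly the conflicting pairs with one endpoint in $F$ and one in $\bar{F}$, each counted once (since $F\cap\bar{F}=\emptyset$), while the total number of such pairs is $|F|\,|\bar{F}|$. Hence after flipping, the number of conflicting $F$--$\bar{F}$ pairs becomes $|F|\,|\bar{F}|-c_\pi(F,\bar{F})$, so the total number of conflicting pairs drops by $c_\pi(F,\bar{F})-\bigl(|F|\,|\bar{F}|-c_\pi(F,\bar{F})\bigr)=2c_\pi(F,\bar{F})-|F|\,|\bar{F}|$. Multiplying by $2$ to pass from pair counts back to $h_1$ yields $\Delta(F)=h_1(\pi)-h_1(\pi\ominus F)=4c_\pi(F,\bar{F})-2|F|\,|\bar{F}|$. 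As an alternative that makes the bookkeeping completely mechanical, I would instead sum (\ref{eq:flip}) directly: for $v\in F$ it gives $c_{\pi\ominus F}(v)=c_\pi(v,F)+|\bar{F}|-c_\pi(v,\bar{F})$, and symmetrically for $v\in\bar{F}$; summing over all $v$ and using $c_\pi(F,F)=\sum_{v\in F}c_\pi(v,F)$, $c_\pi(\bar{F},\bar{F})=\sum_{v\in\bar{F}}c_\pi(v,\bar{F})$, one gets $h_1(\pi\ominus F)=c_\pi(F,F)+c_\pi(\bar{F},\bar{F})+2|F|\,|\bar{F}|-2c_\pi(F,\bar{F})$, while $v\notin C_\pi(v)$ gives $h_1(\pi)=c_\pi(F,F)+c_\pi(\bar{F},\bar{F})+2c_\pi(F,\bar{F})$; subtracting the two identities gives the claim.

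I do not expect a genuine obstacle here; the only thing requiring care is the counting convention, namely that $c_\pi(S,S)$ double-counts each pair inside $S$ whereas $c_\pi(F,\bar{F})$ counts each straddling pair once, and that $v\notin C_\pi(v)$ so that $c_\pi(v)=c_\pi(v,F)+c_\pi(v,\bar{F})$ with no leftover diagonal term. Getting these conventions straight is the whole content of the proof.
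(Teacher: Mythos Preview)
Your proposal is correct and follows essentially the same approach as the paper: the paper's proof invokes (\ref{eq:flip}) to note that only the $F$--$\bar{F}$ pairs change, observes that the new count of cross-conflicts is $|F||\bar{F}|-c_\pi(F,\bar{F})$, and multiplies the difference by $2$ to obtain $\Delta(F)=2\bigl(c_\pi(F,\bar{F})-(|F||\bar{F}|-c_\pi(F,\bar{F}))\bigr)$, exactly your first argument. Your alternative direct summation of (\ref{eq:flip}) is also correct and simply makes the same bookkeeping explicit.
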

\begin{proof}
By (\ref{eq:flip}), we only need to count the conflict pairs crossing $(F,\bar{F})$. Since $c_\pi(F,\bar{F})=c_\pi(\bar{F},F)$, we have
\begin{eqnarray*}
\Delta(F)=2(c_\pi(F,\bar{F})-(|F||\bar{F}|-c_\pi(F,\bar{F})))=4c_\pi(F,\bar{F})-2|F||\bar{F}|.
\end{eqnarray*}
\qed\end{proof}
\begin{corollary}\label{flipone}
$\Delta(\{v\})=4c_\pi(v)-2(n-1)$.
\end{corollary}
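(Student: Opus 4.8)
The plan is to obtain the corollary as the special case $F=\{v\}$ of Lemma~\ref{lem:flip}. First I would substitute $F=\{v\}$ into the identity $\Delta(F)=4c_\pi(F,\bar F)-2|F||\bar F|$. Here $|F|=1$ and $\bar F=V\setminus\{v\}$ has size $n-1$, so the term $2|F||\bar F|$ becomes $2(n-1)$ immediately.

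The only thing that requires a (trivial) remark is the identification $c_\pi(\{v\},\bar F)=c_\pi(v)$. By definition $c_\pi(\{v\},\bar F)=\sum_{u\in\{v\}}c_\pi(u,\bar F)=c_\pi(v,\bar F)=|C_\pi(v)\cap(V\setminus\{v\})|$. Since we have adopted the convention $v\notin C_\pi(v)$, the set $C_\pi(v)$ is already contained in $V\setminus\{v\}$, so intersecting with $V\setminus\{v\}$ changes nothing and $c_\pi(v,\bar F)=|C_\pi(v)|=c_\pi(v)$. Plugging this into the displayed identity yields $\Delta(\{v\})=4c_\pi(v)-2(n-1)$, as claimed.

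There is essentially no obstacle here: the statement is a direct corollary of Lemma~\ref{lem:flip}, and the entire content is the bookkeeping that a singleton flipping set $F=\{v\}$ has $|F||\bar F|=n-1$ and crossing conflict count equal to $v$'s own conflict number. The only point worth a sentence in the writeup is the appeal to the convention that $v$ is never counted among its own conflicts, which is what makes $c_\pi(v,\bar F)$ and $c_\pi(v)$ coincide.
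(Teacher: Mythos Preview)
Your proposal is correct and matches the paper's intent exactly: the corollary is stated immediately after Lemma~\ref{lem:flip} without proof, precisely because it is the specialization $F=\{v\}$ you describe. Your remark that the convention $v\notin C_\pi(v)$ is what gives $c_\pi(v,\bar F)=c_\pi(v)$ is the only nontrivial point, and you handle it correctly.
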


\section{Reduction algorithm}

We shall solve {\sc Min-Sum 2-Clustering} and {\sc Min-Square 2-Clustering} by finding flipping sets.
In this section we propose a reduction algorithm which reduces the search space of flipping sets, and the reduction will be used in the next two sections.

\begin{definition}
Let $\pi$ be a 2-partition of $V$ and $K,t$ be nonnegative integers. A vertex subset $F$ is a $(K,t)$-feasible flipping set for $\pi$ if $\sum_v c_{\pi'}(v)\leq K$ and $c_{\pi'}(v)\leq t$ for any vertex $v\in V$, where $\pi'=\pi\ominus F$.
When $K$ and $t$ are clear from the context, we shall simply say that $F$ is a feasible flipping set.
\end{definition}
By definition, an empty set may be a $(K,t)$-feasible flipping set.
The goal of this section is a reduction algorithm achieving the next lemma.
\begin{lemma}\label{lem:red}
Let $\pi_0$ be a 2-partition of $V$ and $K$, $t$, $f$ be nonnegative integers satisfying $t+2f<n$.
Given $(G,\pi_0,K,t,f)$, one can in $O(n^2)$ time compute a vertex subset $U$, a 2-partition $\pi$, and an integer $m$ such that
\begin{quote}
\begin{itemize}
	\item[$(i)$] $m\leq f$;  
	\item[$(ii)$] $|U|\leq K/(n-t-2m)-m$ or $m=0$; and 
	\item[$(iii)$] if there exists a $(K,t)$-feasible flipping set for $\pi_0$ of size at most $f$, then there exists a $(K,t)$-feasible flipping set $F\subseteq U$ for $\pi$ of size at most $m$.
\end{itemize}
\end{quote}
\end{lemma}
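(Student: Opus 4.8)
The plan is to run a kernelization-style loop which, starting from $(\pi,U,m)=(\pi_0,V,f)$, repeatedly detects a vertex of $U$ whose membership in the residual flipping set is forced and then either flips it (shrinking the budget) or simply drops it from $U$.

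Throughout I would maintain the invariant: for every $(K,t)$-feasible flipping set $F^*$ for $\pi_0$ with $|F^*|\le f$ that flips every vertex already committed to be flipped and keeps every vertex already committed to be kept, the set $F^*\cap U$ is a $(K,t)$-feasible flipping set for $\pi$, is contained in $U$, and has size at most $m$. This gives $(iii)$ on halting, and $(i)$ is clear since $m$ never exceeds $f$. The two rules come from rewriting (\ref{eq:flip}): if $F\subseteq U$ with $|F|\le m$ is feasible for $\pi$ and $\pi'=\pi\ominus F$, then for $v\in U\setminus F$, $c_{\pi'}(v)=c_\pi(v)-2c_\pi(v,F)+|F|\ge\max\{c_\pi(v,V\setminus U),\,c_\pi(v)-m\}$, while for $v\in F$, $c_{\pi'}(v)=2c_\pi(v,F)-c_\pi(v)+(n-|F|)$, which gives $c_\pi(v)\ge n-t-|F|\ge n-t-m$ and $c_{\pi'}(v)\ge n-|U|-c_\pi(v,V\setminus U)$. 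Since every vertex has $c_{\pi'}(\cdot)\le t$, the rules are: (force) if some $v\in U$ has $c_\pi(v,V\setminus U)>t$ or $c_\pi(v)>t+m$, then $v$ must be flipped --- flip it, delete it from $U$, decrease $m$ by one; (forbid) if some $v\in U$ has $c_\pi(v)<n-t-m$, then $v$ must be kept --- delete it from $U$. A short check shows each application preserves the invariant. Maintaining the arrays $c_\pi(\cdot)$ and $c_\pi(\cdot,V\setminus U)$ with $O(n)$ updates per step, the loop runs in $O(n^2)$ time after the $O(n^2)$ initialization; if $m$ ever becomes negative the instance is infeasible and I output $(\emptyset,\pi_0,0)$.

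When the loop halts, every $v\in U$ satisfies $c_\pi(v,V\setminus U)\le t$, $c_\pi(v)\le t+m$, and $c_\pi(v)\ge n-t-m$. If $\pi$ is itself $(K,t)$-feasible (an $O(n^2)$ test) I output $(\emptyset,\pi,0)$; then $(ii)$ holds through its ``$m=0$'' alternative. Otherwise, for every admissible $F^*$ the set $F:=F^*\cap U$ is non-empty, and evaluating $c_{\pi\ominus F}(v)\ge n-|U|-c_\pi(v,V\setminus U)\ge n-|U|-t$ at a vertex $v\in F$ together with $c_{\pi\ominus F}(v)\le t$ forces $|U|\ge n-2t$; in particular $n-t-2m>0$ and $m>0$. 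Writing $\pi'=\pi\ominus F$, the inequalities above give $c_{\pi'}(v)\ge n-t-2m$ for every $v\in U$ (for $v\notin F$ from $c_{\pi'}(v)\ge c_\pi(v)-m$, for $v\in F$ from $c_{\pi'}(v)\ge -(t+m)+(n-m)$), hence $K\ge h_1(\pi')\ge|U|(n-t-2m)$. If the computed $U$ already satisfies $|U|\le K/(n-t-2m)-m$, I output $(U,\pi,m)$; otherwise I would argue that no feasible $F$ of size $\le m$ can exist --- using Lemma~\ref{lem:flip}, the cap $c_{\pi'}(v)\le t$ on $F$ (which forces $c_\pi(F,V\setminus F)\ge|F|(n-|F|-t)$ and hence $\Delta(F)\ge 2|F|(n-|F|-2t)$), and the bound $|U|\ge n-2t$ --- and then output $(U,\pi,0)$, so that $(ii)$ holds vacuously.

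The routine ingredients are the soundness of the two rules and the $O(n^2)$ running time. The delicate step, and the one I expect to be the main obstacle, is the size bound $(ii)$ --- specifically squeezing out a ``$-m$'' rather than a crude ``$+m$'' term: this is why the force-rule must be stated via $c_\pi(v,V\setminus U)$ and not merely $c_\pi(v)$, and it requires separating the small-budget regime $2m<n-2t$ (in which the undetermined set empties out and one simply reports $m=0$) from the complementary regime, in which the still-unused budget of size $m$ must be charged against the profit identity of Lemma~\ref{lem:flip}.
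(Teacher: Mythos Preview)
Your force/forbid loop is essentially the paper's rules R1 and R2, and that part is sound. The gap is in the step you yourself flag as ``the main obstacle'': establishing $|U|\le K/(n-t-2m)-m$. Your direct count, $c_{\pi'}(v)\ge n-t-2m$ for every $v\in U$, gives only $|U|\le K/(n-t-2m)$, off by exactly the ``$-m$'' you need. Your fallback --- declare ``no feasible $F$ of size $\le m$'' and output $m=0$ --- requires that $|U|>K/(n-t-2m)-m$ rule out \emph{every} size $f'\le m$, but the ingredients you list ($\Delta(F)\ge 2|F|(n-|F|-2t)$ and $|U|\ge n-2t$) do not yield this. In fact, redoing the count with $|F|=f'$ and the halting bound $c_\pi(v)\le t+m$ gives $|U|\le K/(n-t-m-f')-f'$; as a function of $f'$ this is decreasing precisely when $K<(n-t-m-f')(n-t-2m)$, so for small $K$ a feasible $F$ of size $f'<m$ can coexist with $|U|>K/(n-t-2m)-m$. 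In that situation your output $(U,\pi,0)$ violates~(iii), since $\pi$ was checked infeasible but a genuine flipping set of size $f'$ still exists.

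The paper sidesteps this by \emph{not} trying to exclude all sizes at once. Its Lemma~\ref{lem:ub} is stated for flipping sets of size \emph{exactly} $f$ (under the hypothesis $c_\pi(v)\le t+f$ for all $v$), and rule~R3 then says: if the bound on $|U|$ fails, no feasible set of size exactly $f$ exists, so decrement $f$ by one, recompute $U=\{v:c_\pi(v)\ge n-t-f\}$ with the tighter threshold, and re-enter the R1 loop. This iterative decrement is what guarantees that on termination either $|U|\le K/(n-t-2m)-m$ holds or $m=0$. To fix your argument you would need either to add this same decrement-and-retry step, or to prove a stronger ``size $\le m$'' version of Lemma~\ref{lem:ub}, which your sketch does not do.
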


The bound $f$ on the size of flipping sets will be called  \emph{flipping quota}. 
In the remaining paragraphs of this section, we first describe the reduction algorithm, and then show the correctness of the reduction rules in  Lemmas~\ref{red-rule} and \ref{lem:ub}. The proof of Lemma~\ref{lem:red} will be delayed to the end of this section.
The reduction algorithm is based on the following reduction rules for any 2-partition $\pi$ and integer $f$. We omit the parameters $(K,t)$ in the rules.
\begin{quote}
\begin{itemize}
\item[R1:] If $c_\pi(v)>t+f$, then $v$ must be in any feasible flipping set for $\pi$ of size at most $f$.
\item[R2:] If $c_\pi (v)<n-t-f$, then $v$ cannot be in any feasible flipping set for $\pi$ of size at most $f$.
\item[R3:] If $c_\pi(v)\leq t+f$ for all $v\in V$ and $|U|> K/(n-t-2f)-f$, where $U=\{v\mid c_\pi(v)\geq n-t-f\}$, then there is no feasible flipping set for $\pi$ of size exactly $f$.
\end{itemize}
\end{quote}
Algorithm~\ref{alg:red} is the reduction algorithm.
We note that the upper bound of $|U|$ in R3 is for the case of flipping set of size \emph{exactly} $f$. When $|U|> K/(n-t-2f)-f$, it is still possible that there is a feasible flipping set of size less than $f$. Therefore, the algorithm iteratively decreases $f$ until the bound is satisfied or the flipping quota is zero.

\begin{algorithm}[t]
\caption{: {\sc Reduction}$(G,\pi_0,K,t,f)$}\label{alg:red}
{\bf Input:} a graph $G=(V,E)$, a 2-partition $\pi_0$, and integers $K$, $t$ and $f$.\\
{\bf Output:} a 2-partition $\pi$, a vertex subset $U$, and an integer $m$.
\begin{algorithmic}[1]
\State initially $\pi=\pi_0$;
\State compute $c_\pi(u)$ for each $u$ and construct $U\leftarrow\{u\mid c_\pi(u)\geq n-t-f\}$; 
\While{$\exists v\in U$ such that $c_\pi(v)>t+f$}\label{loop}
\State $\pi\leftarrow \pi\ominus v$; \Comment{flipping $v$}
\State remove $v$ from $U$ and update $c_\pi(u)$ for each $u$;
\State $f\leftarrow f-1$ and $U\leftarrow \{u\mid c_\pi(u)\geq n-t-f\}$;
\State {\bf if} $f=0$ {\bf then} goto step \ref{step:end};
\EndWhile
\If{$|U|>\frac{K}{n-t-2f}-f$}
\State $f\leftarrow f-1$ and $U\leftarrow \{u\mid c_\pi(u)\geq n-t-f\}$;
\State {\bf if} $f=0$ {\bf then} goto step \ref{step:end};
\State goto step \ref{loop};
\EndIf
\State \Return $(U,\pi,m=f)$;\label{step:end}
\end{algorithmic}
\end{algorithm}

\begin{lemma}\label{red-rule}
The reduction rules R1 and R2 are correct.
\end{lemma}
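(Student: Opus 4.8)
The plan is to prove each rule by a short swapping/exchange argument using the profit formula from Lemma~\ref{lem:flip} and its Corollary~\ref{flipone}, applied to the 2-partition $\pi'$ obtained \emph{after} the prospective flipping set $F$ has been applied. The key observation is that whether a vertex $v$ belongs to a feasible flipping set $F$ for $\pi$ is governed by its conflict number $c_\pi(v)$ relative to the thresholds $t$ and $n-t$, because flipping $v$ \emph{alone} changes its conflict number from $c_\pi(v)$ to $n-1-c_\pi(v)$ (Corollary~\ref{flipone} rewritten as $c_{\pi\ominus\{v\}}(v)=n-1-c_\pi(v)$), and flipping a whole set $F$ perturbs this by at most $|F|$ because only the $|F|-1$ potential conflict changes with other members of $F$ are affected. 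More precisely, if $v\in F$ and $\pi'=\pi\ominus F$, then from (\ref{eq:flip}) one gets $|c_{\pi'}(v)-(n-1-c_\pi(v))|\le |F|-1 < f$; and if $v\notin F$, then $|c_{\pi'}(v)-c_\pi(v)|\le |F|\le f$.

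For \textbf{R1}: suppose $c_\pi(v)>t+f$ and, for contradiction, that $F$ is a feasible flipping set for $\pi$ with $|F|\le f$ and $v\notin F$. Then $c_{\pi'}(v)\ge c_\pi(v)-|F| > t+f-f = t$, contradicting feasibility (which requires $c_{\pi'}(v)\le t$). Hence $v\in F$.

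For \textbf{R2}: suppose $c_\pi(v)<n-t-f$ and, for contradiction, that $F$ is a feasible flipping set for $\pi$ with $|F|\le f$ and $v\in F$. Then in $\pi'=\pi\ominus F$ we have $c_{\pi'}(v)\ge (n-1-c_\pi(v)) - (|F|-1) = n-c_\pi(v)-|F| > n-(n-t-f)-f = t$, again contradicting $c_{\pi'}(v)\le t$. Hence $v\notin F$.

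The only slightly delicate point—and the one I would state carefully—is the estimate $|c_{\pi'}(v)-(n-1-c_\pi(v))|\le |F|-1$ for $v\in F$, since this is exactly what makes R2 go through; it follows by comparing (\ref{eq:flip}) with the single-flip case and noting that the conflict relations of $v$ toward $\bar F$ are all reversed (contributing $n-1-c_\pi(v)$ up to the $v\notin C$ convention) while the relations toward $F\setminus\{v\}$, of which there are at most $|F|-1$, are the only ones that can deviate. There is no real obstacle here beyond keeping the off-by-one bookkeeping ($v\notin C_\pi(v)$, sizes $|F|-1$ versus $|F|$) straight; the argument is otherwise a direct consequence of the flip formula.
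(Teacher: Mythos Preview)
Your argument is correct and is essentially the same as the paper's: both prove R1 by noting that if $v\notin F$ its conflict number can drop by at most $|F|\le f$ under flipping $F$, and prove R2 by first flipping $v$ to get conflict number $n-1-c_\pi(v)$ and then observing the remaining at most $|F|-1$ flips can lower it by at most $|F|-1$. Your write-up is slightly more formal (explicit contrapositive and careful $|F|-1$ versus $|F|$ bookkeeping via~(\ref{eq:flip})), but the underlying idea is identical; one minor quibble is that the identity $c_{\pi\ominus\{v\}}(v)=n-1-c_\pi(v)$ comes directly from the discussion preceding~(\ref{eq:flip}) rather than from Corollary~\ref{flipone}.
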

\begin{proof}
Since the conflict number of $v$ is decreased by at most one when another vertex is flipped, if $c_\pi(v)>t+f$ and $v$ is not flipped, then its conflict number will be larger than $t$. So R1 is correct.

For R2, if $c_\pi (v)<n-t-f$, then flipping $v$ will change its conflict number to $n-1-c_\pi (v)>t+f-1$, and further flipping $f-1$ vertices make the conflict number at least $t+1$. Therefore R2 is correct. 
\qed\end{proof}

We now show the upper bound of $|U|$ in rule R3.
\begin{lemma}\label{lem:ub}
Suppose that $t+2f<n$ and $c_\pi(v)\leq t+f$ for all $v\in V$.
If there exists a $(K,t)$-feasible flipping set $F\subseteq U$ for $\pi$ of size $f$, then $|U|\leq K/(n-t-2f)-f$, where $U=\{v\mid c_\pi(v)\geq n-t-f\}$.
\end{lemma}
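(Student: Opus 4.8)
The plan is to work with the flipped partition $\pi'=\pi\ominus F$ — so that, since flipping is an involution, $\pi=\pi'\ominus F$ — put $d=n-t-2f$, and prove the single inequality
\[
\sum_{v\in V}c_{\pi'}(v)\ \ge\ (|U|+f)\,d .
\]
Since $F$ is $(K,t)$-feasible for $\pi$ we have $\sum_v c_{\pi'}(v)\le K$, and since $t+2f<n$ we have $d>0$; dividing then gives $|U|+f\le K/d$, which is exactly the claim. The point to be careful about is that a direct estimate only yields $|U|\,d\le\sum_v c_{\pi'}(v)\le K$, i.e. $|U|\le K/d$, so the whole game is to squeeze out the extra additive $fd$.

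The first step is a per-vertex estimate: for every $v\in U$,
\[
c_{\pi'}(v)\ \ge\ d+2\,c_{\pi'}(v,F).
\]
This comes from expressing $c_\pi(v)$ in terms of $\pi'$ via the flip formula~(\ref{eq:flip}) applied to $\pi=\pi'\ominus F$. If $v\in F$ then $c_\pi(v)=2c_{\pi'}(v,F)-c_{\pi'}(v)+n-f$, and combining with the hypothesis $c_\pi(v)\le t+f$ gives the bound; if $v\in U\setminus F\subseteq\bar F$ then $c_\pi(v)=c_{\pi'}(v)+f-2c_{\pi'}(v,F)$, and combining with $c_\pi(v)\ge n-t-f$ (which holds because $v\in U$) gives the same bound. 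In particular $c_{\pi'}(v)\ge d$ for every $v\in U$, and hence $\sum_{v\in F}c_{\pi'}(v)\ge fd$ because $F\subseteq U$.

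The second step sums this up and recovers the missing $fd$ from the conflict pairs meeting $F$. Summing the per-vertex estimate over $U$ gives $\sum_{v\in U}c_{\pi'}(v)\ge |U|d+2\,c_{\pi'}(U,F)$, and for $w\in W:=V\setminus U$ the trivial bound $c_{\pi'}(w)\ge c_{\pi'}(w,F)$ gives $\sum_{w\in W}c_{\pi'}(w)\ge c_{\pi'}(W,F)$; adding,
\[
\sum_{v\in V}c_{\pi'}(v)\ \ge\ |U|d+2\,c_{\pi'}(U,F)+c_{\pi'}(W,F)\ \ge\ |U|d+\bigl(c_{\pi'}(U,F)+c_{\pi'}(W,F)\bigr).
\]
Now, since $U\cup W=V$ and using the symmetry $c_{\pi'}(S_1,S_2)=c_{\pi'}(S_2,S_1)$, one has $c_{\pi'}(U,F)+c_{\pi'}(W,F)=c_{\pi'}(V,F)=c_{\pi'}(F,V)=\sum_{v\in F}c_{\pi'}(v)\ge fd$. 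Therefore $\sum_v c_{\pi'}(v)\ge |U|d+fd=(|U|+f)d$, which finishes the argument.

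The main obstacle — really the only nonobvious point — is this last recovery step. The additive $-f$ in $|U|\le K/(n-t-2f)-f$ does not come from any feasibility inequality: those, once rearranged, only ever give lower bounds on quantities like $c_\pi(F,\bar F)$, which is the wrong direction for bounding $\sum_v c_{\pi'}(v)$ from below. Instead it comes from the bookkeeping observation that the $F$-incident conflict pairs, which appear with coefficient $2$ in the $U$-estimate, together with those inside $W$, reassemble $\sum_{v\in F}c_{\pi'}(v)$, and this is at least $fd$ precisely because $F\subseteq U$.
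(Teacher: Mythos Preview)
Your proof is correct and takes a genuinely different route from the paper's. The paper works entirely in the original partition $\pi$: it invokes Lemma~\ref{lem:flip} to write $\sum_v c_\pi(v)\le K+4c_\pi(F,\bar F)-2f(n-f)$, splits $V$ into the three pieces $Y=U\setminus F$, $F$, and $X=V\setminus U$, lower-bounds $\sum_{F}c_\pi$ and $\sum_{X}c_\pi$ separately in terms of $c_\pi(F,\bar F)$, and thereby upper-bounds $\sum_{Y}c_\pi$; combining with $c_\pi(v)\ge n-t-f$ on $Y$ and the estimate $c_\pi(F,\bar F)\le f(t+f)$ yields $|Y|\le K/(n-t-2f)-2f$, hence $|U|=|Y|+f\le K/(n-t-2f)-f$. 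Your argument instead passes to $\pi'=\pi\ominus F$ and proves a single per-vertex inequality $c_{\pi'}(v)\ge d+2c_{\pi'}(v,F)$ valid uniformly over all of $U$ (treating $v\in F$ and $v\in U\setminus F$ via the two cases of~(\ref{eq:flip}) but arriving at the same bound), then recovers the extra $fd$ by reassembling $c_{\pi'}(U,F)+c_{\pi'}(W,F)=\sum_{v\in F}c_{\pi'}(v)\ge fd$. Your version is a bit more symmetric---one inequality instead of three, and no intermediate appearance of $c_\pi(F,\bar F)$---while the paper's version ties more directly into the flip-profit formula that is reused elsewhere. Both are short and use the same hypotheses in essentially the same places.
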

\begin{proof}
Let $\bar{F}=V\setminus F$ and $X=V\setminus U$. We omit the subscript $\pi$ in the proof.
By Lemma~\ref{lem:flip}, 
\begin{eqnarray}
\sum_{v\in V}c(v)\leq K+4c(F,\bar{F})-2f(n-f). \label{ub1}
\end{eqnarray}
Let $Y=U\setminus F$. Since $V=Y\cup F\cup X$ and $Y, F, X$ are mutually disjoint, we have that
\[ \sum_{v\in V}c(v)=\sum_{v\in Y}c(v)+\sum_{v\in F}c(v)+\sum_{v\in X}c(v). \]
For $v\in F$, since $c(v)\geq c(v,\bar{F})$,
\begin{eqnarray}
\sum_{v\in F}c(v)\geq c(F,\bar{F}). \label{ub2}
\end{eqnarray}
Since $(X,Y)$ is a 2-partition of $\bar{F}$, we have that
\begin{eqnarray*}
c(X,F)&=&c(F,X)=\sum_{v\in F}c(v,X)\\
&=& \sum_{v\in F}(c(v,\bar{F})-c(v,Y))\\
&\geq& \sum_{v\in F}(c(v,\bar{F})-|Y|)\\
&=&c(F,\bar{F})-f|Y|,
\end{eqnarray*}
and then 
\begin{eqnarray}
\sum_{v\in X}c(v)\geq c(X,F)\geq c(F,\bar{F})-f|Y|. \label{ub3}
\end{eqnarray}
Therefore, by (\ref{ub1}), (\ref{ub2}) and (\ref{ub3}),
\begin{eqnarray*}
\sum_{v\in Y}c(v)&=&\sum_{v\in V}c(v)-\sum_{v\in F}c(v)-\sum_{v\in X}c(v) \\
&\leq& (K+4c(F,\bar{F})-2f(n-f))-c(F,\bar{F})-(c(F,\bar{F})-f|Y|)\\
&=&K+2c(F,\bar{F})-2f(n-f)+f|Y|.
\end{eqnarray*}
Since $c(v)\geq n-t-f$ for any $v\in Y$, we have that
\begin{eqnarray*}
(n-t-f)|Y|\leq \sum_{v\in Y}c(v)\leq K+2c(F,\bar{F})-2f(n-f)+f|Y|,
\end{eqnarray*}
and then 
\begin{eqnarray}
|Y|\leq \frac{K+2c(F,\bar{F})-2f(n-f)}{n-t-2f}
\end{eqnarray}
by the assumption $t+2f< n$. Since $c(v)\leq t+f$ for any $v\in F$, we have that $c(F,\bar{F})\leq f(t+f)$, and thus
\begin{eqnarray}
|Y|&\leq& \frac{K+2f(t+f)-2f(n-f)}{n-t-2f} \nonumber\\
&=&\frac{K+2f(t-n+2f)}{n-t-2f}
=\frac{K}{n-t-2f}-2f.
\end{eqnarray}
Finally,
\begin{eqnarray}
|U|&=&f+|Y|\leq \frac{K}{n-t-2f}-f. \label{ub}
\end{eqnarray}
\qed\end{proof}

{\flushleft \it Proof of Lemma \ref{lem:red} }:
First we show the time complexity.
The initial conflict numbers for all vertices can be computed in $O(n^2)$ time.
The while-loop is executed at most $n$ times, and each loop takes $O(n)$ time for finding and flipping a vertex, as well as updating the conflict numbers and the set $U$. The total time complexity is therefore $O(n^2)$.

The conclusions (i) and (ii) are trivial from the reduction algorithm, and (iii) follows from the correctnesses of the reduction rules, which are shown in Lemmas~\ref{red-rule} and \ref{lem:ub}. Note that, by rules R1 and R2, we only need to find the flipping set in $U$.
\qed

\section{Minimum total conflict number}

In this section we show an algorithm for {\sc Min-Sum 2-Clustering}, in which the cost function is defined by $h_1(\pi)=\sum_v c_\pi(v)$, i.e., the total conflict number. 
First we show how to cope with the simple case which will be excluded in the main procedure. 
A 2-partition $(V_1,V_2)$ is \emph{trivial} if $V_1$ or $V_2$ is empty; and is \emph{extreme} if $|V_1|=1$ or $|V_2|=1$.
\begin{lemma}\label{extreme}
Finding an extreme 2-partition $\pi$ with minimum $h_1(\pi)$ can be done in $O(n^2)$ time. 
\end{lemma}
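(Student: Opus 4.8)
The plan is to reduce the problem to computing, for each vertex $v$, the cost $h_1(\pi_v)$ of the extreme 2-partition $\pi_v=(\{v\},V\setminus\{v\})$, and then to take the minimum over all $n$ choices. Since an extreme 2-partition is determined (up to labeling) by which single vertex forms the singleton cluster, there are exactly $n$ candidates, so it suffices to show that the whole collection of values $\{h_1(\pi_v):v\in V\}$ can be produced in $O(n^2)$ total time.

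First I would write down $h_1(\pi_v)$ explicitly in terms of the degree $d(v)=|N_G[v]|-1$ and the structure of $G-v$. In the partition $\pi_v=(\{v\},V\setminus\{v\})$ the conflicts are of two kinds: (a) conflicts involving $v$, namely every edge from $v$ to $V\setminus\{v\}$, contributing $d(v)$ to $c_{\pi_v}(v)$ and $1$ to each neighbour's conflict number; and (b) conflicts entirely inside the big cluster $V\setminus\{v\}$, namely every non-edge of $G$ among those $n-1$ vertices, i.e. $\binom{n-1}{2}-|E(G-v)|$ such pairs, each contributing $2$ to $h_1$. Summing, $h_1(\pi_v)=2d(v)+2\bigl(\binom{n-1}{2}-(|E|-d(v))\bigr)=(n-1)(n-2)-2|E|+4d(v)$. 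So once $|E|$ and all the degrees $d(v)$ are known — which takes $O(n^2)$ time by scanning the adjacency matrix once — each $h_1(\pi_v)$ is evaluated by a constant-time formula, and the minimum over the $n$ vertices (restricted to those with $V\setminus\{v\}\neq\emptyset$, i.e. $n\geq 2$) is found in $O(n)$ additional time. One can also derive the same bound directly from Corollary~\ref{flipone}: start from the trivial partition and flip the complement of $v$, or equivalently note $h_1(\pi_v)=h_1(\text{trivial})-\Delta(V\setminus\{v\})$ via Lemma~\ref{lem:flip}; but the explicit degree formula is cleaner.

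The only subtlety, and the step I would be most careful about, is bookkeeping on the boundary cases: we must exclude the trivial partition ($n=1$ gives no valid extreme partition), and when $n=2$ the unique extreme partition is also trivial-looking, so the lemma implicitly assumes $n\geq 3$ (or one adopts the convention that $h_1$ of the two-singleton partition is whatever the formula gives). Beyond that, there is no real obstacle: correctness of the formula for $h_1(\pi_v)$ is a direct count of conflicting pairs, and the $O(n^2)$ bound is dominated by the single pass needed to compute all degrees (equivalently, by the $O(n^2)$ cost of computing all conflict numbers for one 2-partition as already noted in Section~2). I would therefore present the short derivation of the formula, observe it is evaluable in $O(1)$ per vertex after an $O(n^2)$ preprocessing pass, and conclude.
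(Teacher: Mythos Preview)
Your proof is correct and arrives at the same conclusion as the paper: the optimal extreme partition is obtained by isolating a minimum-degree vertex, so an $O(n^2)$ degree scan suffices. The paper's own argument is the variant you mention in passing---it starts from the trivial partition $(V,\emptyset)$, notes $c(v)=n-|N_G[v]|$ there, and invokes Corollary~\ref{flipone} to see that flipping the vertex of smallest $|N_G[v]|$ yields the cheapest extreme partition---whereas you instead derive the explicit closed form $h_1(\pi_v)=(n-1)(n-2)-2|E|+4d(v)$ by direct pair counting; both routes are equivalent and equally short.
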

\begin{proof}
The conflict number of any vertex $v$ in the 2-partition $(V,\emptyset)$ is $n-|N_G[v]|$. By Corollary~\ref{flipone}, if $v$ is a vertex with minimum $|N_G[v]|$, then $(\{v\},V\setminus \{v\})$ is an extreme 2-partition with minimum cost $h_1$. That is, we only need to find a vertex with minimum degree in $G$.
\qed\end{proof}

\begin{lemma}\label{lem:ms}
If $\pi$ is a 2-partition such that $h_1(\pi)\leq h_1(\pi\ominus v)$ for any vertex $v$, then $c_\pi(v)\leq (n-1)/2$ for each $v$.
\end{lemma}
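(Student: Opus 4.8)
The plan is to read this off directly from Corollary~\ref{flipone}. The hypothesis says that flipping any single vertex $v$ cannot decrease the total conflict number, i.e. $h_1(\pi\ominus v)\ge h_1(\pi)$ for every $v\in V$. Recalling that the profit is defined as the \emph{decrement} $\Delta(\{v\})=h_1(\pi)-h_1(\pi\ominus v)$, this hypothesis is exactly the statement that $\Delta(\{v\})\le 0$ for every vertex $v$.

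Next I would substitute the closed form from Corollary~\ref{flipone}, namely $\Delta(\{v\})=4c_\pi(v)-2(n-1)$. The inequality $\Delta(\{v\})\le 0$ then becomes $4c_\pi(v)-2(n-1)\le 0$, which rearranges to $c_\pi(v)\le (n-1)/2$. Since $v$ was arbitrary, the bound holds for every vertex, which is the claim.

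There is essentially no obstacle in this argument; it is a one-line consequence of the profit formula. The only point requiring a moment of care is the sign convention: $\Delta(F)$ is the amount by which $h_1$ \emph{drops} upon flipping $F$, so local optimality of $\pi$ under single-vertex flips translates to $\Delta(\{v\})\le 0$ (not $\ge 0$), and it is this direction of the inequality that yields the upper bound on $c_\pi(v)$.
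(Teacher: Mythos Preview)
Your proof is correct and is essentially the same argument as the paper's: both invoke Corollary~\ref{flipone} to translate the local-optimality hypothesis into the inequality $4c_\pi(v)-2(n-1)\le 0$. The only cosmetic difference is that the paper phrases it contrapositively (if $c_\pi(v)>(n-1)/2$ then flipping $v$ strictly decreases $h_1$), whereas you argue directly.
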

\begin{proof}
If $c_\pi(v)>(n-1)/2$, then by Corollary~\ref{flipone} the profit of flipping $v$ is $4c_\pi(v)-2(n-1)>0$, and thus flipping $v$ decreases the $h_1$ cost.
\qed\end{proof}

\begin{lemma}\label{ms-flip}
Suppose that the $h_1$ cost of any extreme 2-partition is larger than $k$. 
If there exists a non-trivial and non-extreme 2-partition $\pi$ with $h_1(\pi)\leq k$, then there exists a $(k,t)$-feasible flipping set $F$ for $\pi_0$ of size at most $f$, where $\pi_0=(N_G[s],V\setminus N_G[s])$ for some vertex $s$, $t=(n-1)/2$, and $f=k/n$.
\end{lemma}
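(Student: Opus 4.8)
The plan is to descend from the given good 2-partition to a locally optimal one, and then observe that the canonical neighbourhood partition around its minimum-conflict vertex differs from it by only a small flip.

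\emph{Step 1 (pass to a local optimum).} By hypothesis the collection of non-trivial, non-extreme 2-partitions $\pi$ with $h_1(\pi)\le k$ is nonempty; pick one minimizing $h_1$ and call it $\pi$. I would then check that $h_1(\pi)\le h_1(\pi\ominus v)$ for every vertex $v$: since $\pi$ is non-extreme, $\pi\ominus v$ is non-trivial; if $\pi\ominus v$ is also non-extreme this inequality follows from the minimality of $\pi$, and if $\pi\ominus v$ is extreme it follows from the assumption that every extreme 2-partition has $h_1$-cost larger than $k\ge h_1(\pi)$. Hence Lemma~\ref{lem:ms} applies and gives $c_\pi(v)\le (n-1)/2=t$ for every $v$.

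\emph{Step 2 (identify $s$ and $F$).} Let $s$ be a vertex of minimum conflict number in $\pi$, write $\pi=(V_1,V_2)$ with $s\in V_1$, put $\pi_0=(N_G[s],V\setminus N_G[s])$, and set $F=N_G[s]\ominus V_1$ (note $s\in N_G[s]\cap V_1$, so $s\notin F$). Flipping $F$ in $\pi_0$ turns $N_G[s]$ into $N_G[s]\ominus F=V_1$, so $\pi_0\ominus F=(V_1,V_2)=\pi$. Therefore $\sum_v c_{\pi_0\ominus F}(v)=h_1(\pi)\le k$ and $c_{\pi_0\ominus F}(v)\le t$ for all $v$, i.e.\ $F$ is a $(k,t)$-feasible flipping set for $\pi_0$.

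\emph{Step 3 (size bound).} It remains to bound $|F|$. Split $F=(V_1\setminus N_G[s])\cup(N_G[s]\setminus V_1)$. The first part consists of the non-neighbours of $s$ lying in its own cluster, the second of the neighbours of $s$ lying in the opposite cluster; in both cases these are precisely the vertices in conflict with $s$, so $|V_1\setminus N_G[s]|=c_\pi(s,V_1)$ and $|N_G[s]\setminus V_1|=c_\pi(s,V_2)$, whence $|F|=c_\pi(s,V_1)+c_\pi(s,V_2)=c_\pi(s)$. Since $s$ has the minimum conflict number, $c_\pi(s)\le \frac{1}{n}\sum_v c_\pi(v)=h_1(\pi)/n\le k/n=f$, which finishes the proof.

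The crux — and essentially the only nonroutine point — is the identity $|N_G[s]\ominus V_1|=c_\pi(s)$, combined with the min-vs-average bound $c_\pi(s)\le h_1(\pi)/n$; everything else is bookkeeping. The one place needing a little care is Step~1, making sure the descent to a local optimum never slips into a trivial or extreme 2-partition, which is exactly where the hypothesis on extreme partitions is consumed.
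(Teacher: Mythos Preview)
Your proof is correct and follows essentially the same approach as the paper: pass to a non-trivial non-extreme minimizer $\pi$, use Lemma~\ref{lem:ms} to get the individual bound $t=(n-1)/2$, choose $s$ with minimum conflict number, and take $F$ to be the conflict set of $s$ (which you write as $N_G[s]\ominus V_1$, while the paper writes it directly as $C_\pi(s)$ --- the same set). The only cosmetic difference is that the paper defines $F=C_\pi(s)$ first and then identifies $\pi\ominus F$ with $\pi_0$, whereas you define $F$ via the symmetric difference and then verify $|F|=c_\pi(s)$; the content is identical.
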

\begin{proof}
Let $\pi$ be a non-trivial and non-extreme 2-partition with minimum $h_1(\pi)$. Thus, $h_1(\pi)\leq k$ and both clusters of $\pi$ contain at least two vertices. 
By the minimality of $\pi$, if $h_1(\pi\ominus v)<h_1(\pi)$ for some $v$, then $\pi\ominus v$ must be extreme. However, it contradicts to the assumption that the $h_1$ cost of any extreme 2-partition is larger than $k$. 
Therefore $h_1(\pi)\leq h_1(\pi\ominus v)$ for any $v$, and by Lemma~\ref{lem:ms} we have that $c_{\pi}(v)\leq (n-1)/2$ for each $v$.

By the pigeonhole principle, there exists a vertex $s$ with $c_{\pi}(s)\leq h_1(\pi)/n\leq k/n$. Let $F=C_{\pi}(s)$. By definition, no vertex is in conflict with $s$ in $\pi\ominus F$, i.e., $\pi\ominus F=(N_G[s],V\setminus N_G[s])\equiv \pi_0$ which is the unique 2-partition with no conflict incident to $s$.
Therefore, $|F|\leq k/n$, and flipping $F$ in $\pi_0$ yields a 2-partition $\pi$ such that $h_1(\pi)\leq k$ and $c_{\pi}(v)\leq (n-1)/2$ for each $v$. That is, $F$ is
a $(k,t)$-feasible flipping set for $\pi_0$ of size at most $k/n$.
\qed\end{proof}

By Lemma~\ref{lem:red}, we can use the reduction algorithm with $K=k$, $t=(n-1)/2$ and $f=k/n$. The assumption $t+2f<n$ in Lemma~\ref{lem:red} is satisfied when $k\leq n^2/4$.
In fact, the largest value of $k$ we will use in this section is $0.185n^2$ (Corollary~\ref{ms:large}). 
The reduction algorithm returns $( U,\pi, m)$, and the remaining work is to search a flipping set $F\subseteq U$ with $h_1(\pi\ominus F)\leq k$ and $|F|\leq m$.
This work can be done by a simple search-tree algorithm which picks an arbitrary undetermined vertex $v$ and recursively solves the problem for two cases: flipping $v$ or not. Algorithm~\ref{alg:ms} is the proposed algorithm. Note that we need to try every vertex $s$ as the one in Lemma~\ref{ms-flip}.
\begin{algorithm}[t]
\caption{: Min-sum 2-clustering}\label{alg:ms}
{\bf Input:} a graph $G=(V,E)$ and integer $k$.\\
{\bf Output:} determining if existing $\pi$ with $h_1(\pi)\leq k$.
\begin{algorithmic}[1]
\If{existing an extreme 2-partition with cost at most $k$}
\State \Return True;
\EndIf 
\For{each $s\in V$} \label{v-loop}
\State $\pi_0\leftarrow (N_G[s],V\setminus N_G[s])$;
\State call {\sc Reduction}$(G,\pi_0,k,(n-1)/2,k/n)$ to compute $(U,\pi,m)$; \label{step:red}
\State $\chi\leftarrow c_\pi(X,X)$, where $X=V\setminus U$;\Comment{Preparing for tree-search}
\State construct the list $\mathcal{L}_1$ of $C_\pi(v,U)$ and the list $L_2$ of $c_\pi(v,X)$ for each $v\in U$;
\If{{\sc Search1}($U,m,L_2,\chi$)=True}
\State \Return True;
\EndIf
\EndFor
\State \Return False.
\end{algorithmic}
\end{algorithm}

A naive implementation of the search-tree algorithm takes $O(n^2)$ time for each recursive call, and the time complexity of the search-tree algorithm will be $O(n^2)$ multiplied by the number of recursive calls.
Similar to the technique usually used in the design of fixed-parameter algorithms \cite{nei00}, if the time complexity of each recursive call is a function in $|U|$ but not in $n$, then we can reduce the polynomial factor in the total time complexity, which is exactly the case shown in Lemma~\ref{time:search}. To this aim, the recursive procedure is designed in Algorithm~\ref{alg:search-ms}. 

\begin{algorithm}[t]
\caption{Search-tree algorithm for {\sc Min-Sum 2-Clustering}}\label{alg:search-ms}
{\bf Input:} a set $U$ of undetermined vertices, a flipping quota $m$, a list $L_2$ of $c(v,X)$ for each $v\in U$, and $\chi=c(X,X)$, where $X=V\setminus U$. In addition, a list $\mathcal{L}_1$ of $C(v,U)$ for each $v\in U$ is stored as a global variable.
\begin{algorithmic}[1]
\Procedure{Search1}{$U,m,L_2,\chi$}
\State {\bf if} $\chi>k$ {\bf then} \Return False; \label{step:sol1}
\If{$U=\emptyset$ or $m=0$}\Comment{no vertex can be flipped}
\State $q\leftarrow \chi+\sum_{v\in U}(|C(v,U)|+2c(v,X))$;\Comment{total conflict number}
\State {\bf if} $q\leq k$ {\bf then} \Return True {\bf else} \Return False;
\EndIf\label{step:sol2}
\State pick an arbitrary vertex $u\in U$;
\State $U'\leftarrow U\setminus \{u\}$; \Comment{$X'=V\setminus U'$}
\State modify $\mathcal{L}_1$: $C(v,U')\leftarrow C(v,U)\setminus \{u\}, \forall v\in U'$; record the modifications in $L_3$;
\State $\chi'\leftarrow \chi+2c_\pi(u,X)$;\Comment{move $u$ to $X$ without flipping}
\State construct $L_2'$ from $L_2$ by $c(v,X')\leftarrow c(v,X)+c(v,u), \forall v\in U'$;
\If{{\sc Search1}($U',m,L_2',\chi'$)=True}
\State \Return True;
\EndIf
\State $\chi''\leftarrow \chi+2(|X|-c(u,X))$;\Comment{flip and move $u$ to $X$}\label{step:flip1}
\State construct $L_2''$ from $L_2$ by $c(v,X')\leftarrow c(v,X)+1-c(v,u), \forall v\in U'$;\label{step:flip2}
\If{{\sc Search1}($U',m-1,L_2'',\chi''$)=True}
\State \Return True;
\EndIf
\State recover $\mathcal{L}_1$ by undoing the modifications in $L_3$;
\State \Return False;
\EndProcedure
\end{algorithmic}
\end{algorithm}
\begin{lemma}
The procedure {\sc Search1} is correct and each recursive call takes $O(|U|)$ time.
\end{lemma}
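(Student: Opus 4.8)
\emph{Setting up the statement.} I read ``correct'' as the specification that {\sc Search1} actually has to meet for Algorithm~\ref{alg:ms}: the call {\sc Search1}$(U,m,L_2,\chi)$ returns \textsf{True} iff there is a set $F\subseteq U$ with $|F|\le m$ and $h_1(\tilde\pi\ominus F)\le k$, where $\tilde\pi$ is the 2-partition and $X=V\setminus U$ the set of already-decided vertices encoded by the current data, i.e.\ $\mathcal{L}_1$ stores $C_{\tilde\pi}(v,U)$ for $v\in U$, $L_2$ stores $c_{\tilde\pi}(v,X)$ for $v\in U$, and $\chi=c_{\tilde\pi}(X,X)$. (At the top-level call in Algorithm~\ref{alg:ms} we have $\tilde\pi=\pi$, which is exactly what the main procedure needs.) The plan is to prove this by induction on $|U|$, and to fold the $O(|U|)$-time claim into the same induction since it only concerns the work done in a single invocation.

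\emph{Base cases and the pruning step.} First I would record the monotonicity fact underlying the cut-off ``if $\chi>k$ then return \textsf{False}'': for any $F\subseteq U$, (\ref{eq:flip}) says only conflicting pairs in $F\times\bar F$ change when flipping $F$, and $X\subseteq\bar F$, so every within-$X$ conflicting pair is untouched, whence $h_1(\tilde\pi\ominus F)\ge c_{\tilde\pi\ominus F}(X,X)=\chi$. Thus $\chi>k$ rules out every $F$, so the cut-off is sound and loses nothing. For the leaves $U=\emptyset$ or $m=0$ the only admissible set is $F=\emptyset$, and splitting $h_1(\tilde\pi)$ over $X$ and $U$ and using $c(X,U)=c(U,X)=\sum_{v\in U}c(v,X)$ and $c(U,U)=\sum_{v\in U}|C(v,U)|$ gives
\[ h_1(\tilde\pi)=\chi+\sum_{v\in U}\bigl(|C(v,U)|+2c(v,X)\bigr), \]
which is exactly the quantity $q$ the procedure computes; hence returning ``$q\le k$'' is correct.

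\emph{Inductive step.} With $U\neq\emptyset$, $m\ge1$, $\chi\le k$, pick $u\in U$. Any candidate $F\subseteq U$ either omits $u$ (then $F\subseteq U'=U\setminus\{u\}$) or contains it (then $F=\{u\}\cup F'$, $F'\subseteq U'$, $|F'|\le m-1$); the two recursive calls are built to cover these. For the first call the partition is unchanged and $u$ is merely moved into the decided set: adding the single vertex $u$ to $X$ changes $c(X,X)$ by $2c(u,X)$ and each $c(v,X)$, $v\in U'$, by $c(v,u)$, so $(\chi',L_2')$ is exactly the data for $(\tilde\pi,X\cup\{u\})$, and the induction hypothesis says the call decides whether some $F\subseteq U'$, $|F|\le m$, gives $h_1(\tilde\pi\ominus F)\le k$. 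For the second call $\tilde\pi$ is replaced by $\tilde\pi\ominus u$; since $u$ lies in neither $X$ nor $U'$, flipping it negates precisely $u$'s conflict relations, so $c_{\tilde\pi\ominus u}(u,X)=|X|-c(u,X)$ and $c_{\tilde\pi\ominus u}(v,u)=1-c(v,u)$ for $v\in U'$ — exactly how $\chi''$ and $L_2''$ are formed — while all within-$U'$ conflicts are untouched, so in both branches the correct restricted list is $C(v,U')=C(v,U)\setminus\{u\}$; this is why $\mathcal{L}_1$ is edited once (and undone once) around the two calls. Applying the induction hypothesis to the second call (quota $m-1$, partition $\tilde\pi\ominus u$) and taking the disjunction of the two answers yields precisely ``$\exists F\subseteq U$, $|F|\le m$, $h_1(\tilde\pi\ominus F)\le k$'', closing the induction.

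\emph{Running time of one invocation, and the main obstacle.} The tests $\chi>k$, picking $u$, the $O(1)$ arithmetic for $\chi',\chi''$ (using $|X|=n-|U|$, kept alongside $U$), and the leaf computation of $q$ all cost $O(|U|)$ or less. The potentially expensive parts are the updates to $\mathcal{L}_1$ and to $L_2$, and here one should implement $\mathcal{L}_1$ as doubly linked adjacency lists with mutual pointers between the two copies of each pair and with a size counter per vertex: then deleting $u$ from $C(v,U)$ for every $v\in C(u,U)$, logging these deletions in $L_3$, and later reversing them, all take $O(|C(u,U)|)=O(|U|)$; and forming $L_2'$ (resp.\ $L_2''$) is an $O(|U|)$ copy of $L_2$ restricted to $U'$ followed by $O(|C(u,U)|)$ corrections indexed by $u$'s $U$-neighbours. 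Hence each recursive call, excluding the calls it spawns, runs in $O(|U|)$ time. I expect the only real subtlety to be exactly this bookkeeping point — never letting any operation be indexed by $X$, so the per-call cost is $O(|U|)$ rather than $O(n)$; the correctness half is a routine (if slightly tedious) check that the incremental formulas for $\chi$, $L_2$ and $\mathcal{L}_1$ match the two intended partitions $(\tilde\pi,X\cup\{u\})$ and $(\tilde\pi\ominus u,X\cup\{u\})$.
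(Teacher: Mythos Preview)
Your proof is correct and follows essentially the same approach as the paper: both verify that the two branches exhaust the subsets of $U$ of size at most $m$, that the incremental updates to $\chi$, $L_2$, and $\mathcal{L}_1$ correctly encode the intended partitions, and that the per-call work is $O(|U|)$ because only $u$'s conflict list governs the $\mathcal{L}_1$ edits. Your write-up is in fact more careful than the paper's --- you explicitly justify the soundness of the $\chi>k$ cut-off and spell out the data-structure invariants, whereas the paper dispatches correctness with three one-line observations and leaves the pruning step unmentioned.
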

\begin{proof}
The correctness of the algorithm follows from the following three simple observations. First, it explores all subsets of $U$ with size at most $m$ in the worst case. Second,  
the total conflict number $h_1(\pi)$ can be computed as $c_\pi(X,X)+2c_\pi(U,X)+c_\pi(U,U)$, where $X=V\setminus U$. Third, when the vertex $u$ is flipped, the two sets of vertices conflicting and non-conflicting with $u$ exchange, and therefore the formulas at steps~\ref{step:flip1} and \ref{step:flip2} of Algorithm~\ref{alg:search-ms} are correct.

At each recursive call, there are only $O(|U|)$ data to be updated. 
The number of conflicting pairs in $U$ may be up to $\Theta(|U|^2)$. To avoid copying the conflicting pairs, the list $\mathcal{L}_1$ is stored as a global variable and the modifications are stored in a local variable $L_3$.
When returning from the recursive call with ``False'', $\mathcal{L}_1$ is recovered.
Note that it is not necessary to recover $\mathcal{L}_1$ when the recursive call returns ``True''. 
Since the number of modifications stored in $L_3$ is upper bounded by $O(|U|)$, the total time complexity for each recursive call is $O(|U|)$.
\qed\end{proof}

The algorithm we show here only returns True or False. In the case that a desired 2-partition needs to be output, we can record the flipped vertex at each recursive call (in constant time), and the 2-partition can be found by back tracking on the search tree in an additional $O(n)$ time.
Another thing that should be remarked is how to exclude trivial 2-partitions in the search-tree algorithm, which by definition are invalid. 
Let $\pi=(V_1,V_2)$ be the 2-partition returned by the reduction algorithm at step~\ref{step:red} of Algorithm~\ref{alg:ms}.
Recall that the initial 2-partition is $(N_G[s],V\setminus N_G[s])$ for some vertex $s$. Since the initial conflict number of $s$ is zero, the reduction algorithm never flips $s$, and therefore $s\in V_1\setminus U$.
Thus, $V_2$ is the only possible flipping set to result in a trivial 2-partition.
A simple way to avoid returning a trivial 2-partition uses a boolean variable which indicates whether there is a vertex fixed in $V_2$. When reaching a leaf of the branching tree, it can be easily verified if it is the invalid flipping set. 

\begin{lemma}\label{time:search}
Algorithm \ref{alg:search-ms} runs in $O(\phi^{|U|+m})$ time, where $\phi=\frac{1+\sqrt{5}}{2}\approx 1.618$.
\end{lemma}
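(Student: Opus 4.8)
The plan is to analyze the recursion tree of {\sc Search1} and solve the resulting branching recurrence. First I would make the branching structure explicit: a non‑leaf call picks one vertex $u\in U$, deletes it from $U$, and issues exactly two recursive calls on $U'=U\setminus\{u\}$ --- one keeping the quota $m$ (the ``do not flip $u$'' branch) and one with quota $m-1$ (the ``flip $u$'' branch) --- while a call is a leaf precisely when $U=\emptyset$ or $m=0$. By the previous lemma the work done inside a single call, excluding the recursive calls (updating $\mathcal{L}_1$ and $L_2$, recording and undoing the modifications in $L_3$, recomputing $\chi$, and the leaf evaluation of $q$), is $O(|U|)$. I would then use $\mu=|U|+m$ as the complexity measure: the no‑flip branch drops $|U|$ by one and leaves $m$ unchanged, so $\mu$ decreases by $1$; the flip branch drops both $|U|$ and $m$, so $\mu$ decreases by $2$. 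Hence the number of recursive calls $p(\mu)$ obeys $p(\mu)\le p(\mu-1)+p(\mu-2)+O(1)$ with $p(0),p(1)=O(1)$, a Fibonacci‑type recurrence whose solution is $p(\mu)=O(\phi^{\mu})$, where $\phi=\frac{1+\sqrt{5}}{2}$ is the positive root of $x^2=x+1$ (the additive constant is absorbed by strengthening the bound to $p(\mu)\le a\phi^{\mu}-b$).

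Multiplying the number of calls by the per‑call cost gives only $O(|U|\cdot\phi^{|U|+m})$, which is weaker than claimed, so the second and more delicate step is to bound the total time directly. Writing $T(|U|,m)$ for the running time, the recursion yields $T(|U|,m)\le c\,|U|+T(|U|-1,m)+T(|U|-1,m-1)$ for $|U|,m\ge 1$, with base cases $T(|U|,0)=O(|U|)$ and $T(0,m)=O(1)$. I would prove by induction on $\mu=|U|+m$ a strengthened estimate of the form $T(|U|,m)\le C\phi^{|U|+m}-c(|U|+3)$ for a sufficiently large constant $C$. In the inductive step the identity $\phi^{\mu}=\phi^{\mu-1}+\phi^{\mu-2}$ makes the two recursive terms combine to exactly $C\phi^{|U|+m}$, and the linear corrections inherited from the two children cancel the $+c\,|U|$ charged at the current node; the base cases hold because $\phi^{|U|}$ eventually dominates every fixed linear function of $|U|$, so a single $C$ works throughout. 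This gives $T(|U|,m)=O(\phi^{|U|+m})$, as desired.

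The one real obstacle is the $O(|U|)$ cost paid at \emph{every} node of the recursion tree: a naive product loses a polynomial factor, and a plain hypothesis $T\le C\phi^{\mu}$ does not propagate through the recurrence. The fix is precisely the subtractive linear term in the induction hypothesis, which makes the per‑node cost telescope away; after that the argument is the textbook analysis of a $(\mu-1,\mu-2)$‑branching. An equivalent route is to bound $\sum_{v}|U_{v}|$ over all nodes $v$ of the recursion tree by the same induction and then add the $O(\phi^{\mu})$ bound on the number of nodes.
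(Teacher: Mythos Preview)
Your proposal is correct and follows essentially the same route as the paper: set up the recurrence $T(a,b)\le T(a-1,b)+T(a-1,b-1)+O(a)$ with linear base cases, and prove by induction the strengthened bound $T(a,b)\le C\phi^{a+b}-c\,(a+\text{const})$ so that the subtractive linear term absorbs the $O(a)$ per-node work via $\phi^{a+b}=\phi^{a+b-1}+\phi^{a+b-2}$. The paper uses the correction $-p_1(a+2)$ where you use $-c(|U|+3)$, but either constant makes the induction go through.
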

\begin{proof}
Let $T(a,b)$ denote the time complexity of the search-tree algorithm with $|U|=a$ and $m=b$. There are two branches at each non-leaf node of the search-tree.
For the branch that a vertex is removed from $U$ without flipping, $|U|$ is decreased by one and $m$ is unchanged. For the branch that a vertex is flipped and removed, both $|U|$ and $m$ are decreased by one. Therefore, for some constant $p_1$,
$T(a,b)\leq T(a-1,b)+T(a-1,b-1)+p_1 a$ for $a,b>0$; $T(a,0)\leq p_1 a$ for any $a$; and $T(0,b)\leq p_1$. We shall show by induction that
\begin{eqnarray}\label{treeub}
T(a,b)\leq p_2\phi^{a+b}-p_1 (a+2)
\end{eqnarray}
for some constant $p_2$. Then, the time complexity is
$T(|U|,m)\in O(\phi^{|U|+m})$.

It is easy to see that, for sufficiently large $p_2$,
$T(0,b)\leq p_1 \leq p_2 \phi^{b}-2p_1$; and
$T(a,0)\leq p_1 a\leq p_2 \phi^{a}-p_1 (a+2)$.
Suppose by induction hypothesis that (\ref{treeub}) holds for $T(a-1,b-1)$ and $T(a-1,b)$. Note that $\phi$ is the solution of Fibonacci recursion and therefore $\phi^{i+2}=\phi^{i+1}+\phi^{i}$. For $a,b>0$,
\begin{eqnarray*}
T(a,b)&\leq& T(a-1,b)+T(a-1,b-1)+p_1 a \\
&\leq &(p_2\phi^{a+b-1}-p_1 (a+1))+(p_2\phi^{a+b-2}-p_1 (a+1))+p_1 a\\
&=&p_2\phi^{a+b}-p_1(a+2).
\end{eqnarray*}
\qed\end{proof}

\begin{theorem}\label{thm:minsum}
For $k\leq n^2/4$, {\sc Min-Sum 2-Clustering} can be solved in $O(n\cdot 2.619^{r/(1-4r/n)}+n^3)$ time, where $r=k/n$.
\end{theorem}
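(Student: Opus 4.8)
The plan is to verify that Algorithm~\ref{alg:ms} is correct and then bound its running time, assembling the pieces from Lemmas~\ref{extreme}, \ref{ms-flip}, \ref{lem:red} and \ref{time:search}.

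For correctness I would first dispose of the easy case: if some extreme $2$-partition has $h_1$-cost at most $k$, the algorithm finds one in $O(n^2)$ time via Lemma~\ref{extreme} and correctly answers True, since an extreme $2$-partition is non-trivial. Otherwise every extreme $2$-partition -- and a fortiori every trivial one -- has cost exceeding $k$, so any solution of cost at most $k$, if it exists, is non-trivial and non-extreme. I then invoke Lemma~\ref{ms-flip}: there is a vertex $s$ and a $(k,t)$-feasible flipping set $F$ for $\pi_0=(N_G[s],V\setminus N_G[s])$ with $t=(n-1)/2$ and $|F|\le f:=k/n$. Before applying Lemma~\ref{lem:red} I must check its hypothesis $t+2f<n$; with these values $t+2f=(n-1)/2+2k/n<n$ exactly when $k<n(n+1)/4$, which holds for $k\le n^2/4$ (and one may round to $t=\lfloor(n-1)/2\rfloor$, $f=\lfloor k/n\rfloor$ without harm). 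The algorithm tries every $s$, so for the correct one the call to {\sc Reduction} returns $(U,\pi,m)$ such that, by Lemma~\ref{lem:red}(iii), a $(k,t)$-feasible flipping set $F'\subseteq U$ for $\pi$ of size at most $m$ still exists; {\sc Search1} enumerates all subsets of $U$ of size $\le m$, evaluates $h_1$ through the decomposition $c_\pi(X,X)+2c_\pi(U,X)+c_\pi(U,U)$, and therefore returns True (the unique trivial candidate, flipping all of $V_2$, is screened out by the boolean flag discussed before Lemma~\ref{time:search}). Conversely the algorithm returns True only when {\sc Search1} exhibits an actual $2$-partition of cost $\le k$, so the answer is never wrong.

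For the running time, the opening extreme-partition test costs $O(n^2)$, and the for-loop over $s$ runs $n$ times; each iteration spends $O(n^2)$ on {\sc Reduction} (Lemma~\ref{lem:red}) and $O(n^2)$ on building $\chi$, $\mathcal{L}_1$ and $L_2$. The remaining cost is {\sc Search1}. If the returned $m$ equals $0$ then {\sc Search1} does no branching and runs in $O(|U|)=O(n)$ time; if $m\ge1$, Lemma~\ref{time:search} bounds it by $O(\phi^{|U|+m})$ while Lemma~\ref{lem:red}(ii) gives $|U|+m\le K/(n-t-2m)$ with $K=k$. The key estimate is then that, since $m\le f\le k/n=r$ and $x\mapsto k/(n-t-2x)$ increases on the relevant range, one obtains
\[
|U|+m\ \le\ \frac{k}{\,n-(n-1)/2-2r\,}\ =\ \frac{k}{(n+1)/2-2r}\ <\ \frac{k}{n/2-2r}\ =\ \frac{2k}{n\,(1-4r/n)}\ =\ \frac{2r}{1-4r/n},
\]
the denominator being positive precisely by the hypothesis verified above. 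Hence $\phi^{|U|+m}\le(\phi^2)^{r/(1-4r/n)}$, and since $\phi^2=(3+\sqrt5)/2\approx 2.618<2.619$ this is at most $2.619^{\,r/(1-4r/n)}$. Thus each iteration costs $O(n^2+2.619^{\,r/(1-4r/n)})$, and summing over the $n$ choices of $s$ gives the claimed $O(n\cdot 2.619^{\,r/(1-4r/n)}+n^3)$.

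I expect the running-time bookkeeping, not the correctness, to be the delicate part: one has to combine the general bound of Lemma~\ref{lem:red}(ii) -- whose denominator involves the \emph{returned} quantity $m$, which may be strictly below $f$ -- with the concrete choice $t=(n-1)/2$, using monotonicity to replace $m$ by $r$ and reach the clean exponent $2r/(1-4r/n)$. The $m=0$ corner case (where $|U|$ is uncontrolled but {\sc Search1} is anyway cheap), the constant rounding $\phi^2<2.619$, and the integrality of $t$ and $f$ are minor points but each needs a sentence.
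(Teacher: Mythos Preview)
Your proof is correct and follows essentially the same approach as the paper's: both verify correctness via Lemmas~\ref{extreme} and~\ref{ms-flip}, then bound the search cost by combining the reduction output with $m\le r$ and monotonicity to reach $|U|+m\le 2r/(1-4r/n)$ and apply Lemma~\ref{time:search} with $\phi^2<2.619$. If anything, you are a touch more careful than the paper in spelling out the $m=0$ branch of Lemma~\ref{lem:red}(ii) and the verification of $t+2f<n$.
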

\begin{proof}
A non-trivial 2-partition is either extreme or non-extreme. Algorithm~\ref{alg:ms} copes with the case of extreme 2-partitions at step 1 which takes $O(n^2)$ time by Lemma~\ref{extreme}. 
If there is no extreme 2-partition with $h_1$ cost at most $k$, then Lemma~\ref{ms-flip} can be applied, and the non-extreme case is coped by the remaining steps.  
By (\ref{ub}), the number of undetermined vertices after reductions is  
\begin{eqnarray}
|U|\leq \frac{K}{n-t-2m}-m= \frac{k}{(n+1)/2-2m}-m\leq \frac{k}{n/2-2m}-m, \label{uub0}
\end{eqnarray}
where $K=k$ is the required bound of the total conflict number, $t=(n-1)/2$, and $m$ is the returned flipping quota.  
Since $m\leq k/n$, $|U|+m\leq k/(n/2-2m)\leq 2r/(1-4r/n)$.
By Lemma~\ref{time:search}, the time complexity of the search-tree algorithm is \begin{eqnarray}
O(\phi^{|U|+m})=O(\phi^{2r/(1-4r/n)})\subset O(2.619^{r/(1-4r/n)}) \label{uub1}
\end{eqnarray}
since $\phi^2=1+\phi<2.619$.
By Lemma~\ref{lem:red}, the reduction algorithm takes $O(n^2)$ time, and then the total time complexity follows from that the for-loop in Algorithm~\ref{alg:ms} is executed $n$ times.
\qed\end{proof}
For $k\in \Theta(n^2)$, the time complexity can be expressed as follows, in which the condition $\delta\leq 0.185$ is to ensure the result is better than the naive $O^*(2^n)$-time algorithm.
\begin{corollary}\label{ms:large}
For $k=\delta n^2$ with $\delta\leq 0.185$, {\sc Min-Sum 2-Clustering} can be solved in $O(n\cdot 2.619^{\delta n/(1-4\delta)}+n^3)$ time.
\end{corollary}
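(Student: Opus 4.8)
The plan is to obtain the corollary as an immediate specialization of Theorem~\ref{thm:minsum}. First I would check that the hypothesis of that theorem is met: since $k=\delta n^2$, the condition $k\le n^2/4$ is equivalent to $\delta\le 1/4$, which is implied by $\delta\le 0.185$; hence Theorem~\ref{thm:minsum} applies and {\sc Min-Sum 2-Clustering} is solvable in $O(n\cdot 2.619^{r/(1-4r/n)}+n^3)$ time with $r=k/n$.

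Next I would simplify the exponent. With $r=k/n=\delta n$ we have $r/n=\delta$, and since $\delta\le 0.185<1/4$ the quantity $1-4\delta$ is positive, so $r/(1-4r/n)=\delta n/(1-4\delta)$ is well-defined and positive. Substituting this into the running time of Theorem~\ref{thm:minsum} yields exactly $O(n\cdot 2.619^{\delta n/(1-4\delta)}+n^3)$, the bound claimed in the corollary. This is essentially the whole argument; there is no real obstacle, the only routine care being the algebraic rearrangement of the exponent and the check that $1-4\delta>0$ in the regime considered.

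Finally, I would add a remark on why the threshold $0.185$ appears. It is irrelevant to the correctness of the algorithm, but it is (approximately) the largest $\delta$ for which the stated running time still beats the trivial $O^*(2^n)$ enumeration of all 2-partitions: one verifies that $\frac{\delta}{1-4\delta}\,\log_2 2.619<1$ for every $\delta\le 0.185$ --- at $\delta=0.185$ one has $1-4\delta=0.26$ and $\frac{0.185}{0.26}\,\log_2 2.619\approx 0.99<1$ --- so that $2.619^{\delta n/(1-4\delta)}=o(2^n)$, and the algorithm runs in subexponential time in $n$ for such $\delta$ even though $k=\Theta(n^2)$.
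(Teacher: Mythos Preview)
Your derivation is correct and matches the paper: the corollary is stated there as an immediate specialization of Theorem~\ref{thm:minsum}, obtained by substituting $r=k/n=\delta n$, exactly as you do. One small correction to your final remark: for fixed $\delta\le 0.185$ the running time is $2^{cn}$ with $c<1$, which beats the naive $O^*(2^n)$ algorithm (as the paper phrases it) but is not \emph{subexponential} in $n$; the $2^{o(n)}$ subexponential regime in the paper is reserved for $k\in o(n^2)$.
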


When $k\in o(n^2)$ and $n$ is sufficiently large, we have that $r=k/n\in o(n)$, and then  $1/(1-4r/n)<1+\varepsilon$ for any constant $\varepsilon>0$. The next corollary directly follows from Theorem~\ref{thm:minsum}.

\begin{corollary}\label{smallk}
For $k\in o(n^2)$, {\sc Min-Sum 2-Clustering} can be solved in $O(n\cdot 2.619^{k/n}+n^3)$ time.
\end{corollary}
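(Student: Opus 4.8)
The plan is to obtain the corollary directly from Theorem~\ref{thm:minsum} by bounding how much the exponent $r/(1-4r/n)$ can exceed $r=k/n$. For $k\in o(n^2)$ we have $r/n=k/n^2\to 0$, so $k\le n^2/4$ for all sufficiently large $n$ and Theorem~\ref{thm:minsum} applies; the boundedly many small values of $n$ contribute only $O(1)$ time and are absorbed by the $O(n^3)$ term. Writing $\alpha=1/(1-4r/n)\ge 1$, the proof of Theorem~\ref{thm:minsum} (via~(\ref{uub0}) and Lemma~\ref{time:search}) shows that the search-tree phase runs in $O(\phi^{|U|+m})$ time with $|U|+m\le 2r\alpha$, hence in $O\bigl((\phi^{2\alpha})^{r}\bigr)$ time; combining this with the $O(n^2)$ cost of {\sc Reduction} over the $n$ iterations of Algorithm~\ref{alg:ms} gives a total running time $O\bigl(n\cdot(\phi^{2\alpha})^{k/n}+n^3\bigr)$.

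The one delicate point is that one must keep the base $\phi$ here and not prematurely pass to $2.619$: the gap $r\alpha-r=\dfrac{4r^2/n}{1-4r/n}$ need not be bounded (e.g.\ $k=n^{1.9}$ makes it tend to infinity), so the exponents $r\alpha$ and $r$ do not converge and $(2.619)^{r\alpha}$ is in general not $O(2.619^{r})$. What rescues the estimate is the strict inequality $\phi^{2}=1+\phi=2.6180\ldots<2.619$: since $r/n\to 0$ forces $\alpha\to 1$, we get $\phi^{2\alpha}\to\phi^{2}<2.619$, so $\phi^{2\alpha}\le 2.619$ for all $n$ beyond a threshold depending only on the function $k(n)$. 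Consequently $(\phi^{2\alpha})^{r}\le 2.619^{r}=2.619^{k/n}$ for large $n$, which yields the claimed $O(n\cdot 2.619^{k/n}+n^3)$ bound.

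I do not anticipate any genuine obstacle beyond this observation: the corollary amounts to the remark that the small positive slack deliberately built into the constant $2.619$ (over the true Fibonacci base $\phi^{2}=2.618\ldots$) is exactly enough to swallow the vanishing blow-up factor $\alpha$ coming from the kernel-size bound $|U|\le K/(n-t-2m)-m$ of Lemma~\ref{lem:red}.
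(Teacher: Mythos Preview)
Your proof is correct and follows the same route as the paper, which also derives the corollary from Theorem~\ref{thm:minsum} via the observation that $1/(1-4r/n)\to 1$ when $k\in o(n^{2})$. Your explicit treatment of the base---keeping $\phi$ and using the strict inequality $\phi^{2}<2.619$ to absorb the factor $\alpha$, rather than working with $2.619$ directly (whose exponent overhead $r(\alpha-1)$ need not be bounded)---is in fact more careful than the paper's one-line justification.
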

If $k\in o(n^2)$, then $k/n\in o(n)$. Thus, by Corollary~\ref{smallk}, {\sc Min-Sum 2-Clustering} can be solved in $O^*(2^{o(n)})$ time, that is, in subexponential time.

\begin{corollary}
{\sc Min-Sum 2-Clustering} can be solved in polynomial time for $k\in O(n\log n)$.
\end{corollary}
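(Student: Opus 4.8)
The plan is to read off this corollary directly from Corollary~\ref{smallk}, which asserts that for $k\in o(n^2)$ the problem can be solved in $O(n\cdot 2.619^{k/n}+n^3)$ time; the only work is to check that when $k\in O(n\log n)$ this bound collapses to a polynomial in $n$. First I would note that $O(n\log n)\subseteq o(n^2)$, so Corollary~\ref{smallk} applies and the running time is $O(n\cdot 2.619^{k/n}+n^3)$. Then I would substitute the hypothesis $k\le cn\log n$ for some constant $c$, which gives $k/n\le c\log n$, and hence $2.619^{k/n}\le 2.619^{c\log n}=n^{c\log 2.619}$, a fixed polynomial in $n$ (with the base of the logarithm absorbed into the constant $c$). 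Therefore $n\cdot 2.619^{k/n}\le n^{1+c\log 2.619}$, and the total running time $O(n^{1+c\log 2.619}+n^3)$ is polynomial in $n$.

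The only subtlety worth spelling out is that the exponent $c\log 2.619$ depends on the constant hidden in $k\in O(n\log n)$, so the degree of the polynomial grows with that constant; this is fine for the statement as phrased, since ``polynomial time'' is meant for each fixed such $k$. I would also remark that one need not even invoke the full strength of Theorem~\ref{thm:minsum} with the $1/(1-4r/n)$ correction factor, since for $k\in O(n\log n)$ we have $r=k/n\in O(\log n)=o(n)$, so $1/(1-4r/n)\to 1$ and the exponent is asymptotically $2r$ as already used in deriving Corollary~\ref{smallk}. No step here is a genuine obstacle — the entire content is the elementary identity $a^{\log n}=n^{\log a}$ — so the proof is a one-line consequence of the preceding corollary.
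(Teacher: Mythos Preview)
Your proposal is correct and matches the paper's approach: the paper states this corollary without proof, treating it as an immediate consequence of Corollary~\ref{smallk}, and your argument spells out exactly the intended one-line computation $2.619^{k/n}\le 2.619^{c\log n}=n^{c\log 2.619}$. Your remark that the polynomial degree depends on the hidden constant in $k\in O(n\log n)$ is a worthwhile clarification that the paper leaves implicit.
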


\section{Minimizing the sum of squares}

Recall that $h_2(\pi)=\sum_v c^2_\pi(v)$ is the sum of squared conflict-numbers for a 2-partition $\pi$.
Given a graph $G$ and an integer $k$, {\sc Min-Square 2-Clustering} determines if there exists a 2-partition $\pi$ with $h_2(\pi)\leq k$. In this section, we show a parameterized algorithm for parameter $k$.

\begin{lemma}\label{mq1}
If $h_2(\pi)\leq k$, then $\sum_v c_\pi(v)\leq \sqrt{nk}$ and there exists a vertex $s$ with $c_\pi(s)\leq \sqrt{k/n}$.
\end{lemma}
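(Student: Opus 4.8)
The plan is to prove both inequalities by elementary applications of standard norm comparisons and the pigeonhole principle applied to the vector $(c_\pi(v))_{v\in V}$.

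First I would establish $\sum_v c_\pi(v)\le\sqrt{nk}$. This is just the Cauchy--Schwarz inequality (equivalently, the comparison between the $1$-norm and $2$-norm of a vector of length $n$): $\left(\sum_v c_\pi(v)\right)^2\le n\sum_v c_\pi^2(v)=n\,h_2(\pi)\le nk$, and taking square roots gives the claim. No use of the graph structure is needed here; it is purely arithmetic.

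Next I would derive the existence of a low-conflict vertex $s$. By the pigeonhole principle, some vertex $s$ satisfies $c_\pi(s)\le\frac1n\sum_v c_\pi(v)$, and combining with the bound just proved, $c_\pi(s)\le\frac1n\sqrt{nk}=\sqrt{k/n}$. One could alternatively argue directly from $h_2$: if every vertex had $c_\pi(v)>\sqrt{k/n}$, then $h_2(\pi)=\sum_v c_\pi^2(v)>n\cdot(k/n)=k$, a contradiction; this avoids even invoking the first part. Either route is a two-line argument.

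I do not anticipate any real obstacle here; the statement is a warm-up lemma analogous to the pigeonhole step used in Lemma~\ref{ms-flip} for the min-sum case, the only difference being the square root coming from the quadratic objective. The one point to be slightly careful about is that $c_\pi(s)\le\sqrt{k/n}$ is a real-number bound and $c_\pi(s)$ is an integer, so in later use one should read it as $c_\pi(s)\le\lfloor\sqrt{k/n}\rfloor$; but for the statement as written this is immaterial.
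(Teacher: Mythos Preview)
Your proposal is correct and matches the paper's own proof essentially verbatim: the paper also applies Cauchy--Schwarz to obtain $\left(\sum_v c_\pi(v)\right)^2\le n\,h_2(\pi)\le nk$, and then uses $\min_v c_\pi(v)\le\frac{1}{n}\sum_v c_\pi(v)$ for the second claim. Your alternative direct contradiction argument for the existence of $s$ is a fine variant but not needed.
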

\begin{proof}
By Cauchy-Schwarz inequality,
\[ \left(\sum_v c_\pi(v)\right)^2\leq \left(\sum_{i=1}^n 1^2\right)\left(\sum_v c^2_\pi(v)\right)=nh_2(\pi)\leq nk,\]
and we have that $\sum_v c_\pi(v)\leq \sqrt{nk}$.
The second consequence follows from $\min_v \{c_\pi(v)\}\leq (1/n)\sum_v c_\pi(v)$.
\qed\end{proof}

\begin{lemma}\label{mq2}
If $\pi$ is a 2-partition such that $h_2(\pi)\leq h_2(\pi\ominus v)$ for any vertex $v$, then $c_\pi(v)\leq \sqrt{n(n-1)/2}$ for each vertex $v$.
\end{lemma}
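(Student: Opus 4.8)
The plan is to use the hypothesis that $\pi$ is locally optimal under single-vertex flips: I will compute exactly how $h_2$ changes when a vertex $v$ is flipped, reduce the resulting inequality to a quadratic inequality in $c_\pi(v)$ alone, and then read off the bound. The calculation of the $h_2$-change is straightforward (unlike $h_1$ it is not captured by Lemma~\ref{lem:flip}), so the content is entirely in choosing the right estimates.

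First I would fix a vertex $v$ and write $c = c_\pi(v)$, $C = C_\pi(v)$, $D = V \setminus (C \cup \{v\})$, so that $|C| = c$ and $|D| = n-1-c$. Flipping $v$ toggles the conflict status of every pair containing $v$, so by (\ref{eq:flip}) with $F=\{v\}$ one has $c_{\pi\ominus v}(v) = n-1-c$, while $c_{\pi\ominus v}(u) = c_\pi(u)-1$ for $u \in C$ and $c_{\pi\ominus v}(u) = c_\pi(u)+1$ for $u \in D$. Expanding the squares and collecting the linear terms gives
\[
h_2(\pi\ominus v) - h_2(\pi) = (n-1)(n-2c) + 2\Bigl(\sum_{u\in D} c_\pi(u) - \sum_{u\in C} c_\pi(u)\Bigr).
\]
Then I would split each sum according to where the conflicting vertices lie: since every $u\in C$ already has one conflict with $v$, $\sum_{u\in C} c_\pi(u) = c + c_\pi(C,C) + c_\pi(C,D)$ and $\sum_{u\in D} c_\pi(u) = c_\pi(C,D) + c_\pi(D,D)$. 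The cross term cancels, and the hypothesis $h_2(\pi)\le h_2(\pi\ominus v)$ becomes
\[
(n-1)(n-2c) - 2c + 2\,c_\pi(D,D) - 2\,c_\pi(C,C) \ \ge\ 0 .
\]

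The step I expect to be the real obstacle is extracting a bound as strong as $\sqrt{n(n-1)/2}\approx n/\sqrt2$: the naive estimate $\sum_{u\in D}c_\pi(u)\le |D|(n-2)$ only yields $c\lesssim 3n/4$, which is too weak. The fix is to treat the two quadratic terms asymmetrically — simply discard $c_\pi(C,C)\ge 0$, but bound $c_\pi(D,D) \le |D|(|D|-1) = (n-1-c)(n-2-c)$, an estimate that is small precisely when $c$ is large (so $|D|$ is small). Substituting and expanding converts the inequality into $2c^2 - 6(n-1)c + 3(n-1)^2 - (n-1) \ge 0$.

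To finish, let $g(c)$ denote the left-hand side: it is an upward parabola with vertex at $c = 3(n-1)/2$, hence strictly decreasing on $[0,n-1]$, and $c = c_\pi(v) \le n-1$ lies in that range. Setting $s = \sqrt{n(n-1)/2}$ (which also satisfies $s \le n-1$), a one-line computation gives $g(s) = 2(n-1)\bigl(2(n-1) - 3s\bigr)$, and this is negative since $9s^2 = \frac{9}{2}n(n-1) > 4(n-1)^2$. Because $g$ is decreasing on $[0,n-1]$ and $g(c) \ge 0 > g(s)$, we must have $c \le s$, i.e. $c_\pi(v) \le \sqrt{n(n-1)/2}$. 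Since $v$ was arbitrary, this is the claim.
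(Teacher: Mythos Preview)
Your proof is correct, and it reaches the same bound $\sqrt{n(n-1)/2}$ by a genuinely different estimate from the paper's. Both arguments start from the same expression for $h_2(\pi\ominus v)-h_2(\pi)$, but diverge at the bounding step. The paper applies the inequality only to the maximizer $s=\arg\max_v c_\pi(v)$ and uses the extremality of $s$ to bound $\sum_{u\in \bar Y(s)}c_\pi(u)\le |\bar Y(s)|\,c_\pi(s)=(n-1-c_\pi(s))c_\pi(s)$ (simply dropping $\sum_{u\in Y(s)}c_\pi(u)$); this collapses directly to $c_\pi(s)^2\le n(n-1)/2$ in one line. You instead work with an \emph{arbitrary} vertex, split both sums further so that the cross term $c_\pi(C,D)$ cancels, and then use the purely combinatorial bound $c_\pi(D,D)\le |D|(|D|-1)$; this avoids any appeal to the maximizer, at the cost of landing on the quadratic $2c^2-6(n-1)c+3(n-1)^2-(n-1)\ge 0$ and a short monotonicity argument to locate $c$ relative to $s=\sqrt{n(n-1)/2}$. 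In short: the paper's extremal trick gives a cleaner finish but only treats the maximum conflict number directly, while your combinatorial bound is self-contained for every vertex and trades that uniformity for a little more algebra.
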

\begin{proof}
Consider $\pi'=\pi\ominus v$ for any vertex $v$.
First, $c_{\pi'}(v)=n-1-c_\pi(v)$.
Let $Y(v)=C_\pi(v)$ and $\bar{Y}(v)=V\setminus C_\pi(v)\setminus \{v\}$.
For each $u\in Y(v)$,
$c_{\pi'}(u)=c_\pi(u)-1$; and, for $u\in \bar{Y}(v)$, $c_{\pi'}(u)=c_\pi(u)+1$.
Therefore,
\begin{eqnarray*}
&&h_2(\pi)-h_2(\pi')\\
&=&c_{\pi}^2(v)-(n-1-c_\pi(v))^2 +\sum_{u\in Y(v)}\left(c_\pi^2(u)-(c_\pi(u)-1)^2\right) \\
&&+\sum_{u\in \bar{Y}(v)}\left(c_\pi^2(u)-(c_\pi(u)+1)^2\right)  \\
&=&2(n-1)c_\pi(v)-(n-1)^2+\sum_{u\in Y(v)}(2c_\pi(u)-1)-\sum_{u\in \bar{Y}(v)}(2c_\pi(u)+1)\\
&=&2(n-1)c_\pi(v)-n(n-1)+2\sum_{u\in Y(v)}c_\pi(u)-2\sum_{u\in \bar{Y}(v)}c_\pi(u).
\end{eqnarray*}
Since by the assumption $h_2(\pi)-h_2(\pi')\leq 0$, we have that
\begin{eqnarray}
c_\pi(v)\leq \frac{n}{2}+\frac{1}{n-1}\left(\sum_{u\in \bar{Y}(v)}c_\pi(u)-\sum_{u\in Y(v)}c_\pi(u)\right). \label{sos}
\end{eqnarray}
Let $s=\arg\max_v \{c_\pi(v)\}$.
By (\ref{sos}),
\begin{eqnarray*}
c_\pi(s)&\leq& \frac{n}{2}+\frac{1}{n-1}\left(\sum_{u\in \bar{Y}(s)}c_\pi(u)-\sum_{u\in Y(s)}c_\pi(u)\right) \\
&\leq& \frac{n}{2}+\frac{1}{n-1}\left(\sum_{u\in \bar{Y}(s)}c_\pi(u)\right) \\
&\leq& \frac{n}{2}+\frac{|\bar{Y}(s)|\cdot c_\pi(s)}{n-1} \\
&=&\frac{n}{2}+\frac{(n-1-c_\pi(s))c_\pi(s)}{n-1}
=\frac{n}{2}+c_\pi(s)-\frac{c_\pi^2(s)}{n-1}.
\end{eqnarray*}
That is, $c_\pi^2(s)\leq n(n-1)/2$, and we obtain
\begin{eqnarray}
c_\pi(s)\leq \sqrt{\frac{n(n-1)}{2}}.
\end{eqnarray}
\qed\end{proof}

Similar to Lemma~\ref{ms-flip}, we have the next corollary from Lemmas~\ref{mq1} and \ref{mq2}. Recall that a 2-partition is extreme if one of the two clusters is singleton. 
\begin{corollary}\label{mq-flip}
Suppose that the $h_2$ cost of any extreme 2-partition is larger than $k$. 
If there exists a non-trivial and non-extreme 2-partition $\pi$ with $h_2(\pi)\leq k$, then there exists a $(K,t)$-feasible flipping set $F$ for $\pi_0$ of size at most $f$, where  $\pi_0=(N_G[s],V\setminus N_G[s])$ for some vertex $s$, $K=\sqrt{nk}$, $t=\sqrt{n(n-1)/2}$, and $f=\sqrt{k/n}$.
\end{corollary}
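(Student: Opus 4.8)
The plan is to mirror the argument of Lemma~\ref{ms-flip} verbatim, replacing $h_1$ by $h_2$ and substituting the bounds supplied by Lemmas~\ref{mq1} and~\ref{mq2} for the ones coming from Lemmas~\ref{lem:ms} and the pigeonhole step. First I would let $\pi$ be a non-trivial and non-extreme 2-partition minimizing $h_2(\pi)$ (such a $\pi$ exists because the hypothesis provides one with $h_2(\pi)\le k$ and there are only finitely many 2-partitions). So $h_2(\pi)\le k$ and both clusters of $\pi$ have at least two vertices.

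Next I would show $h_2(\pi)\le h_2(\pi\ominus v)$ for every vertex $v$. Flipping a single vertex of a non-extreme 2-partition keeps both clusters nonempty, so $\pi\ominus v$ is non-trivial; hence if $h_2(\pi\ominus v)<h_2(\pi)$, then by minimality of $\pi$ among non-trivial non-extreme 2-partitions, $\pi\ominus v$ must be extreme, and then it would be an extreme 2-partition of cost $h_2(\pi\ominus v)<h_2(\pi)\le k$, contradicting the standing assumption that every extreme 2-partition has $h_2$-cost larger than $k$. With $h_2(\pi)\le h_2(\pi\ominus v)$ established for all $v$, Lemma~\ref{mq2} gives $c_\pi(v)\le\sqrt{n(n-1)/2}=t$ for every $v$, and Lemma~\ref{mq1} (applicable since $h_2(\pi)\le k$) gives both $\sum_v c_\pi(v)\le\sqrt{nk}=K$ and the existence of a vertex $s$ with $c_\pi(s)\le\sqrt{k/n}=f$. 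Setting $F=C_\pi(s)$, no vertex conflicts with $s$ in $\pi\ominus F$, so $\pi\ominus F=(N_G[s],V\setminus N_G[s])=\pi_0$; since flipping $F$ twice is the identity, $\pi_0\ominus F=\pi$. Therefore, the partition obtained by flipping $F$ in $\pi_0$ is exactly $\pi$, which satisfies $\sum_v c_\pi(v)\le K$ and $c_\pi(v)\le t$ for all $v$, and $|F|=c_\pi(s)\le f$; that is, $F$ is a $(K,t)$-feasible flipping set for $\pi_0$ of size at most $f$, as claimed.

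I do not expect a real obstacle here, since the two technical inputs (Lemmas~\ref{mq1} and~\ref{mq2}) have already done the heavy lifting. The only points needing care are the case analysis that a single flip out of a non-extreme partition lands on an extreme or a non-trivial-non-extreme partition (so that the minimality/extremality contradiction is legitimate), and making sure that the parameters match: the bound $K=\sqrt{nk}$ is forced by the Cauchy--Schwarz estimate in Lemma~\ref{mq1}, the per-vertex bound $t=\sqrt{n(n-1)/2}$ by the local-optimality estimate in Lemma~\ref{mq2}, and the flipping quota $f=\sqrt{k/n}$ by the averaging $\min_v c_\pi(v)\le\tfrac1n\sum_v c_\pi(v)\le\sqrt{k/n}$; both the sum bound and the max bound must hold simultaneously, which is precisely why both lemmas are invoked.
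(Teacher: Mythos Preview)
Your proposal is correct and follows exactly the route the paper intends: it is the argument of Lemma~\ref{ms-flip} transported to $h_2$, with Lemma~\ref{mq2} supplying the per-vertex bound $t=\sqrt{n(n-1)/2}$ and Lemma~\ref{mq1} supplying both the sum bound $K=\sqrt{nk}$ and the existence of $s$ with $c_\pi(s)\le f=\sqrt{k/n}$. Your explicit check that $\pi\ominus v$ is non-trivial (so the minimality/extremality dichotomy is valid) is a nice clarification of a point the paper's proof of Lemma~\ref{ms-flip} leaves implicit.
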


The main steps of the algorithm for {\sc Min-Square 2-Clustering} are quite similar to Algorithm~\ref{alg:ms} in the previous section.
First, an extreme 2-partition with minimum cost $h_2$ can be found in $O(n^3)$ time since there are only $n$ extreme 2-partitions. Then, we can focus on non-extreme 2-partitions.  
By Corollary~\ref{mq-flip}, we use the reduction algorithm with the bound of total conflict number $K=\sqrt{nk}$, individual bound $t=\sqrt{n(n-1)/2}$ and flipping quota $f=\sqrt{k/n}$. The assumption $t+2f<n$ in Lemma~\ref{lem:red} is satisfied when $k\leq 0.021n^3$, 
and the largest value of $k$ we will use in this section is $0.0118n^3$ (Corollary~\ref{msq:large}).
For the reduced instance $U$ and $m$, a search-tree algorithm is employed to check if any desired 2-partition can be resulted from flipping a subset of $U$ with size at most $m$.
The recursive procedure is shown in Algorithm~\ref{alg:search2}. 
Unlike Algorithm~\ref{alg:search-ms}, we did not find a way to compute the cost $h_2$ with time complexity only depending on $|U|$ but not on $n$. Therefore it takes  $O(n^2)$ time for each recursive call, and the time complexity of the search-tree algorithm is $O(n^2)$ multiplied by the number of recursive calls. 
\begin{algorithm}[t]
\caption{Search-tree algorithm for {\sc Min-Square 2-Clustering}}\label{alg:search2}
{\bf Input:} a set $U$ of undetermined vertices, a flipping quota $m$, and a 2-partition $\pi$.
\begin{algorithmic}
\Procedure{{\sc Search2}}{$U,m,\pi$}
\If{$U=\emptyset$ or $m=0$}\Comment{no vertex can be flipped}
\State compute $h_2(\pi)$ and \Return True or False accordingly;
\EndIf
\State pick an arbitrary vertex $u\in U$;
\If{{\sc Search2}($U\setminus \{u\},m,\pi$)=True}\Comment{move $u$ to $X$ without flipping}
\State \Return True;
\ElsIf{{\sc Search2}($U\setminus \{u\},m-1,\pi\ominus u$)=True}\Comment{flip and move $u$ to $X$}
\State \Return True;
\Else
\State \Return False;
\EndIf
\EndProcedure
\end{algorithmic}
\end{algorithm}

\begin{theorem}\label{thm:sq}
For $k\leq 0.021n^3$, {\sc Min-Square 2-Clustering} can be solved in $O(n^3\cdot 5.171^{\theta/(1-(4+2\sqrt{2})\theta/n)})$, where $\theta=\sqrt{k/n}$.
\end{theorem}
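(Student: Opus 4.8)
The plan is to follow the blueprint of the proof of Theorem~\ref{thm:minsum}, with Corollary~\ref{mq-flip} playing the role of Lemma~\ref{ms-flip} and Algorithm~\ref{alg:search2} the role of Algorithm~\ref{alg:search-ms}. First I would dispose of extreme $2$-partitions: evaluating $h_2$ on each of the $n$ candidates $(\{v\},V\setminus\{v\})$ costs $O(n^2)$ apiece, so an extreme $2$-partition of minimum $h_2$ cost is found in $O(n^3)$ time, and if its cost is at most $k$ we answer ``yes''. Otherwise Corollary~\ref{mq-flip} applies, and it suffices to decide, for each vertex $s$, whether $\pi_0=(N_G[s],V\setminus N_G[s])$ admits a $(K,t)$-feasible flipping set of size at most $f$ with $K=\sqrt{nk}$, $t=\sqrt{n(n-1)/2}$, and $f=\sqrt{k/n}=\theta$.

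For each $s$ I would call {\sc Reduction}$(G,\pi_0,K,t,f)$. Its precondition $t+2f<n$ has to be verified: since $t<n/\sqrt{2}$ and $f=\theta$, it is implied by $2\theta<(1-1/\sqrt{2})n$, i.e.\ $\theta<(1/2)(1-1/\sqrt{2})n$, and $k\le 0.021 n^3$ gives $\theta=\sqrt{k/n}\le\sqrt{0.021}\,n$, which is below $(1/2)(1-1/\sqrt{2})n$ because $0.021<(1-1/\sqrt{2})^2/4=0.0214\ldots$. The reduction returns $(U,\pi,m)$ with $m\le f=\theta$, and then {\sc Search2}$(U,m,\pi)$ enumerates all subsets of $U$ of size at most $m$, evaluating $h_2$ at the leaves, hence correctly decides whether a feasible flipping set $F\subseteq U$ of size at most $m$ exists; by Lemma~\ref{lem:red}(iii) this is equivalent to the existence of the desired non-trivial non-extreme $2$-partition. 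Trivial $2$-partitions are ruled out as in Section~4: since $c_{\pi_0}(s)=0$ the vertex $s$ is never flipped by the reduction, so $s$ lies in one cluster of $\pi$ outside $U$, and the only flipping set inside $U$ that could empty a cluster is the whole of the other cluster, which is recognized at a leaf.

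For the running time, each recursive call of {\sc Search2} takes $O(n^2)$ (dominated by the $h_2$-evaluation at a leaf), and its two branches change $(|U|,m)$ by $(-1,0)$ and $(-1,-1)$; this is exactly the recurrence analyzed in Lemma~\ref{time:search}, so the number of calls is $O(\phi^{|U|+m})$, while for $m=0$ the search is a single $O(n^2)$ leaf. Adding the $O(n^3)$ for the extreme case and summing over the $n$ choices of $s$ (the $O(n^2)$ cost of each {\sc Reduction} call is absorbed), the total time is $O(n^3\cdot\phi^{|U|+m})$. It remains to bound $|U|+m$ when $m\ge 1$: by Lemma~\ref{lem:red}(ii) we have $|U|+m\le K/(n-t-2m)$, and since $K=\sqrt{nk}=n\theta$, $t<n/\sqrt{2}$, and $m\le\theta$,
\begin{eqnarray*}
|U|+m&\le&\frac{K}{n-t-2m}<\frac{n\theta}{n-n/\sqrt{2}-2\theta}=\frac{\theta}{(1-1/\sqrt{2})-2\theta/n}\\
&=&(2+\sqrt{2})\cdot\frac{\theta}{1-(4+2\sqrt{2})\theta/n},
\end{eqnarray*}
where the last equality uses $1/(1-1/\sqrt{2})=2+\sqrt{2}$. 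Since $\phi^{2+\sqrt{2}}<5.171$, this gives $\phi^{|U|+m}<5.171^{\theta/(1-(4+2\sqrt{2})\theta/n)}$, and the claimed bound follows.

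This is a routine adaptation of Section~4, so I expect no conceptual obstacle; the places that need care are the constant-chasing in the last display (recognizing $1/(1-1/\sqrt{2})=2+\sqrt{2}$ so that $2/(1-1/\sqrt{2})=4+2\sqrt{2}$, together with the numerical check $\phi^{2+\sqrt{2}}<5.171$), confirming that $k\le 0.021 n^3$ keeps both $t+2f<n$ and the denominator $1-(4+2\sqrt{2})\theta/n$ positive, and noting that {\sc Search2}'s weaker $O(n^2)$-per-call cost (in place of the $O(|U|)$ achieved in Section~4) only inflates the polynomial factor from $n$ to $n^3$ without affecting the exponential base.
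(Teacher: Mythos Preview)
Your proposal is correct and follows essentially the same approach as the paper's own proof: dispose of extreme partitions, invoke Corollary~\ref{mq-flip} to set $K=\sqrt{nk}$, $t=\sqrt{n(n-1)/2}$, $f=\theta$, run the reduction and then {\sc Search2}, and bound $|U|+m\le K/(n-t-2m)$ to obtain the $(2+\sqrt{2})\theta/(1-(4+2\sqrt{2})\theta/n)$ exponent and hence the base $\phi^{2+\sqrt{2}}<5.171$. The only cosmetic difference is that the paper re-derives the $O(\phi^{|U|+m})$ bound on the number of recursive calls from scratch for {\sc Search2} rather than citing Lemma~\ref{time:search}, and it substitutes $m\le\theta$ at the very end rather than inside the denominator, but the computations are the same.
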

\begin{proof}
By the reduction algorithm,
\begin{eqnarray}
|U|&\leq& \frac{K}{n-t-2m}-m \nonumber\\
&=&\frac{\sqrt{nk}}{n-\sqrt{n(n-1)/2}-2m}-m \nonumber\\
&\leq&\frac{\sqrt{nk}}{(1-1/\sqrt{2})n-2m}-m \nonumber\\
&=&\frac{(2+\sqrt{2})\theta}{1-(4+2\sqrt{2})m/n}-m. \label{ub:sq}
\end{eqnarray}

Let $T_0(a,b)$ denote the number of leaf nodes in the search tree explored by the algorithm {\sc Search2} with $|U|=a$ and $m=b$. According to the branching rule,
\begin{eqnarray}
T_0(a,b)\leq \left\{\begin{array}{ll}
T_0(a-1,b)+T_0(a-1,b-1) &\mbox{if }a,b>0; \\
1 &\mbox{otherwise.}
\end{array}\right.\label{t0}
\end{eqnarray}
Next we show by induction that
\begin{eqnarray}
T_0(a,b)\leq \phi^{a+b}. \label{t0b}
\end{eqnarray}
It is clear that (\ref{t0b}) holds when $a=0$ or $b=0$. Suppose by the induction hypothesis that it holds for $T_0(a-1,b)$ and $T_0(a-1,b-1)$. Then, for $a,b>0$,
\[ T_0(a,b)=T_0(a-1,b)+T_0(a-1,b-1)\leq \phi^{a+b-1}+\phi^{a+b-2}= \phi^{a+b} \]
by the identity $\phi^2=\phi+1$.
Since the search tree is binary, the number of recursive calls is bounded by $2T_0(|U|,m)\in O(\phi^{|U|+m})$. 
By (\ref{ub:sq}), since $m\leq \sqrt{k/n}=\theta$,
\begin{eqnarray*}
|U|+m&\leq& \frac{(2+\sqrt{2})\theta}{1-(4+2\sqrt{2})m/n}\\
&\leq&\frac{(2+\sqrt{2})\theta}{1-(4+2\sqrt{2})\theta/n}.
\end{eqnarray*}

Similarly to Algorithm~\ref{alg:ms}, the reduction and the search-tree algorithms are executed $n$ times. 
Since each recursive call take $O(n^2)$ time, the total time complexity is
$O(n^3\cdot \phi^{|U|+m})\subset O(n^3\cdot 5.171^{\theta/(1-(4+2\sqrt{2})\theta/n)})$.
\qed\end{proof}
For $k\in \Theta(n^3)$, the time complexity can be expressed as follows, in which the condition $k\leq 0.0118n^3$ is to ensure the result is better than the naive $O^*(2^n)$-time algorithm.
\begin{corollary}\label{msq:large}
For $k=\delta^2n^3$ with $\delta^2\leq 0.0118$, {\sc Min-Square 2-Clustering} can be solved in $O(n^3\cdot 5.171^{\delta n/(1-(4+2\sqrt{2})\delta)})$.
\end{corollary}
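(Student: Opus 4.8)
The plan is to obtain Corollary~\ref{msq:large} as a direct specialization of Theorem~\ref{thm:sq}; essentially nothing new is needed beyond a substitution and a numerical sanity check. First I would confirm that the hypothesis of Theorem~\ref{thm:sq} is met: with $k=\delta^2 n^3$ and $\delta^2\le 0.0118$ we have $k=\delta^2 n^3\le 0.0118\,n^3<0.021\,n^3$, so Theorem~\ref{thm:sq} applies (and in particular the assumption $t+2f<n$ of Lemma~\ref{lem:red}, invoked inside its proof, holds).

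Next I would rewrite the running time in terms of $\delta$. The parameter appearing in Theorem~\ref{thm:sq} is $\theta=\sqrt{k/n}=\sqrt{\delta^2 n^3/n}=\sqrt{\delta^2 n^2}=\delta n$, hence $\theta/n=\delta$. Substituting into the bound $O(n^3\cdot 5.171^{\theta/(1-(4+2\sqrt2)\theta/n)})$ of Theorem~\ref{thm:sq} immediately yields
\[
O\!\left(n^3\cdot 5.171^{\,\delta n/(1-(4+2\sqrt2)\delta)}\right),
\]
which is the claimed complexity.

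The only point that still deserves a word is the role of the constant $0.0118$, which (as the surrounding text says) is chosen so that the bound beats the trivial $O^*(2^n)$ brute force. To see this I would check that the exponent stays below $n\log_2 2=n$, i.e.\ that
\[
\frac{\delta\ln 5.171}{1-(4+2\sqrt2)\delta}\le \ln 2
\]
for $0\le\delta\le\sqrt{0.0118}$. On this range both $\delta\ln 5.171$ and $1/(1-(4+2\sqrt2)\delta)$ are nonnegative and increasing (using $(4+2\sqrt2)\sqrt{0.0118}<0.75<1$, so the denominator stays positive), hence the left-hand side is increasing and it suffices to evaluate it at $\delta=\sqrt{0.0118}$ and verify the inequality there. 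This single numerical evaluation is the only ``hard'' part, and it is hard only in the bookkeeping sense; conceptually the corollary is immediate from Theorem~\ref{thm:sq}.
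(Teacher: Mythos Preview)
Your proposal is correct and matches the paper's approach: the corollary is stated there without proof, as an immediate specialization of Theorem~\ref{thm:sq} via the substitution $\theta=\sqrt{k/n}=\delta n$, with the accompanying text noting that the constant $0.0118$ is chosen precisely so the bound improves on $O^*(2^n)$. Your added verification that $k\le 0.021 n^3$ and the monotonicity argument for the numerical threshold are exactly the sanity checks one would want to spell out.
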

When $k\in o(n^3)$ and $n$ is sufficiently large, we have that $\theta=\sqrt{k/n}\in o(n)$, and then  $1/(1-(4+2\sqrt{2})\theta/n)<1+\varepsilon$ for any constant $\varepsilon>0$. 
The next corollary directly follows from Theorem~\ref{thm:sq}, which implies that {\sc Min-Square 2-Clustering} can be solved in subexponential time for $k\in o(n^3)$.

\begin{corollary}
For $k\in o(n^3)$, {\sc Min-Square 2-Clustering} can be solved in $O(n^3\cdot 5.171^{\theta})$ time, where $\theta=\sqrt{k/n}$.
\end{corollary}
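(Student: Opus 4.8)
The plan is to obtain the corollary directly from Theorem~\ref{thm:sq} by showing that, once $\theta$ is sub-linear in $n$, the distortion factor $1/(1-(4+2\sqrt2)\theta/n)$ appearing in the exponent can be swallowed by the base $5.171$. First I would check that the hypothesis $k\in o(n^3)$ forces $k\le 0.021n^3$ for all sufficiently large $n$, so that Theorem~\ref{thm:sq} is applicable and yields running time $O(n^3\cdot 5.171^{\theta/(1-(4+2\sqrt2)\theta/n)})$ with $\theta=\sqrt{k/n}$; for the finitely many small $n$ the whole computation is $O(1)$ and is absorbed into the hidden constant.

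Next I would use $k\in o(n^3)\Rightarrow\theta=\sqrt{k/n}\in o(n)\Rightarrow(4+2\sqrt2)\theta/n\to 0$, so that for every fixed $\varepsilon>0$ there is an $n_0$ with $1/(1-(4+2\sqrt2)\theta/n)<1+\varepsilon$ for all $n\ge n_0$. Recall from the proof of Theorem~\ref{thm:sq} that the search tree has only $O(\phi^{|U|+m})$ nodes and that $|U|+m\le(2+\sqrt2)\theta/(1-(4+2\sqrt2)\theta/n)$, where $\phi=(1+\sqrt5)/2$. I would therefore fix $\varepsilon>0$ small enough that $\phi^{(2+\sqrt2)(1+\varepsilon)}\le 5.171$; this choice is legitimate precisely because $\phi^{2+\sqrt2}\approx 5.1705<5.171$. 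Then for $n\ge n_0$ the number of recursive calls is $O(\phi^{(2+\sqrt2)(1+\varepsilon)\theta})=O(5.171^{\theta})$, each call takes $O(n^2)$ time, the outer loop over the $n$ choices of the pivot vertex $s$ multiplies this by $n$, and computing a minimum-cost extreme 2-partition costs $O(n^3)$; hence the total is $O(n^3\cdot5.171^{\theta})$.

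The only step that requires a bit of care is this last bound on the constant: $5.171$ lies only about $5\times10^{-4}$ above $\phi^{2+\sqrt2}$, so the slack available for the $(1+\varepsilon)$ blow-up is tiny (one needs roughly $\varepsilon\lesssim 6\times10^{-5}$), and one must make sure the $\varepsilon$ fixed above genuinely satisfies $\phi^{(2+\sqrt2)(1+\varepsilon)}\le 5.171$ before invoking ``for all sufficiently large $n$''. Everything else --- the validity of the reduction bound~(\ref{ub:sq}), which needs $(4+2\sqrt2)\theta/n<1$ and hence is guaranteed by $k\le 0.021n^3$, and the monotonicity step that replaces $m$ by its upper bound $\theta$ --- is routine.
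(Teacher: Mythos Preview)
Your argument is correct and follows essentially the same approach as the paper, which simply remarks that $\theta\in o(n)$ forces $1/(1-(4+2\sqrt2)\theta/n)<1+\varepsilon$ for every fixed $\varepsilon>0$ and then declares the corollary immediate from Theorem~\ref{thm:sq}. Your write-up is in fact more careful than the paper's one-line justification, since you explicitly go back to the $\phi^{|U|+m}$ bound and exploit the tiny slack $\phi^{2+\sqrt2}\approx 5.1705<5.171$ to absorb the $(1+\varepsilon)$ factor into the base---a point the paper leaves implicit.
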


\begin{corollary}
For $k\in O(n\log^2 n)$, {\sc Min-Square 2-Clustering} can be solved in polynomial time.
\end{corollary}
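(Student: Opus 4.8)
The plan is to derive this directly from the preceding corollary, which states that {\sc Min-Square 2-Clustering} can be solved in $O(n^3\cdot 5.171^{\theta})$ time with $\theta=\sqrt{k/n}$, valid whenever $k\in o(n^3)$. First I would observe that $k\in O(n\log^2 n)$ certainly satisfies $k\in o(n^3)$, since $n\log^2 n = o(n^3)$; hence that corollary applies and the bound $O(n^3\cdot 5.171^{\theta})$ is in force.

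Next I would bound $\theta$. Since $k\leq c\,n\log^2 n$ for some constant $c$ and all sufficiently large $n$, we get $\theta=\sqrt{k/n}\leq \sqrt{c}\,\log n$, i.e.\ $\theta\in O(\log n)$. Consequently $5.171^{\theta}\leq 5.171^{\sqrt{c}\,\log n}=n^{\sqrt{c}\,\log 5.171}$, which is a fixed polynomial in $n$ (the exponent $\sqrt{c}\log 5.171$ is an absolute constant determined by $c$; here $\log$ may be taken to any fixed base provided it is used consistently). Therefore $n^3\cdot 5.171^{\theta}=n^{O(1)}$, and the whole algorithm runs in polynomial time.

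I expect no real obstacle here: the substantive content --- the reduction algorithm of Section~3, the structural bounds in Lemmas~\ref{mq1} and \ref{mq2}, and the branching analysis of Theorem~\ref{thm:sq} --- has already been established, and the hypotheses needed to invoke the preceding corollary (namely $k\in o(n^3)$) have just been checked. The only point worth a sentence is the elementary fact that $5.171^{\Theta(\log n)}$ is polynomial in $n$, which is exactly the computation above. So the proof is a short two-line deduction from the preceding corollary.
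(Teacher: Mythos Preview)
Your proposal is correct and is exactly the intended derivation: the paper states this corollary without proof, as an immediate consequence of the preceding corollary, and your two-line argument ($k\in O(n\log^2 n)\subset o(n^3)$, hence $\theta=\sqrt{k/n}\in O(\log n)$, hence $5.171^\theta=n^{O(1)}$) is precisely what is implicit there.
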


\section{Concluding remarks}

In this paper, we show parameterized algorithms for {\sc Min-Sum 2-Clustering} and {\sc Min-Square 2-Clustering}. The first problem is the same as {\sc 2-Cluster Editing} in the literature with an additional multiplicative factor two in the cost function.
The proposed algorithms run in subexponential time and significantly improve the
brute-force algorithm when $k$ is relatively small, i.e., $k\in o(n^2)$ for {\sc Min-Sum 2-Clustering} and $k\in o(n^3)$ for {\sc Min-Square 2-Clustering}.

The time complexities of the search-tree algorithms are shown by induction in Lemma~\ref{time:search} and Theorem~\ref{thm:sq}. It can be shown that the time complexities of both algorithms are almost tight when $k$ is relatively small. We shall show the case of Algorithm~\ref{alg:search-ms} with $k\in o(n^2)$, and it is similar for Algorithm~\ref{alg:search2} with $k\in o(n^3)$. 
In worst case, the search-tree algorithm explores all subsets of $U$ with sizes at most $m$.
The number of recursive calls is lower bounded by $\sum_{i=0}^m {|U|\choose i}$. From (\ref{uub0}), when $k\in o(n^2)$ and $m=k/n$, we have that $m\in \Theta(|U|)$. 
Let $m=\alpha |U|$ for some constant $\alpha$.
By \cite[Lemma 3.13]{expbook}, 
\[ \sum_{i=0}^m{|U|\choose i}\geq p|U|^{-1/2}\cdot 2^{H(\alpha)|U|}, \]
where $p$ is a constant and $H(\alpha)=-\alpha\log_2 \alpha-(1-\alpha)\log_2(1-\alpha)$ is the \emph{binary entropy function}. 
When $\alpha=\phi^{-2}=(3-\sqrt{5})/2$, it can be verified that $2^{H(\alpha)|U|}=\phi^{|U|+m}$. 

The lower bounds for the two problems are interesting open problems.
Another straightforward question is how to generalize the algorithms to the case of more than two clusters. However, the problem seems to become much more difficult when the number of clusters is more than two. In fact, by the following simple transformation, we can show that there is no algorithm solving {\sc 3-Cluster Editing} in $O^*(2^{k/n})$ time unless NP=P, where $k$ and $n$ are the cost bound and the number of vertices. Let $(G',k)$ be an instance of the NP-complete {\sc 2-Cluster Editing} problem. We construct $G$ from $G'$ by adding an isolated clique of $k+2$ vertices. One can observe that $G$ can be edited into 3 clusters with cost $k$ if and only if $G'$ can be edited into 2 clusters with the same cost.
Since $n>k$, an algorithm solving {\sc 3-Cluster Editing} in $O^*(2^{k/n})$ time can also solve the NP-complete {\sc 2-Cluster Editing} problem 
in polynomial time.

\section*{acknowledgements}
The authors would like to thank the anonymous referees for their helpful comments which improved the presentation significantly. 
This work was supported in part by 
NSC 100-2221-E-194-036-MY3 and NSC 101-2221-E-194-025-MY3 from the National Science Council, Taiwan.

\bibliographystyle{spmpsci}      
\bibliography{2clusterb}

\end{document}